\def\ipad{x}
\ifdefined\ipad{}
\newenvironment{proof*}[1][\proofname]{
  
  \begin{proof}[#1]}{\end{proof}}
\newlength{\leftbarwidth}
\newlength{\leftbarsep}
\newenvironment{customthm}[1]
  {\innercustomthm}
  {\endinnercustomthm}
\theoremstyle{plain}
\newtheorem{theorem}{Theorem}[section]
\newtheorem{question}[theorem]{Question}
\newtheorem{corollary}[theorem]{Corollary}
\newtheorem{lemma}[theorem]{Lemma}
\newtheorem{definition}[theorem]{Definition}
\newtheorem{claim}[theorem]{Claim}
\newcommand{\str}[1]{{\mathbb {#1}}}
\newcommand{\from}{\colon}
\newcommand{\set}[1]{\{#1\}}
\newcommand{\setof}[2]{\set{#1\mid#2}}
\newcommand{\wh}{\widehat}
\def\phi{\varphi}
\def\cal{\mathcal}
\def\N{\mathbb N}
\def\R{\mathbf R}
\def\epsilon{\varepsilon}
\renewcommand{\subset}{\subseteq}
\renewcommand{\setminus}{-}
\renewcommand{\le}{\leqslant}
\newcommand{\dist}{\textup{dist}}
\newcommand\lcw{\textup{lcw}}
\newcommand\cw{{\rm cw}}
\newcommand{\CC}{\cal C}
\newcommand{\DD}{\cal D}
\newcommand{\trans}[1]{\mathsf{#1}}
\newcommand{\interp}[1]{\mathsf{#1}}
\newcommand{\FO}{\text{\rm{FO}}\xspace}
\newcommand{\MSO}{\text{\rm{MSO}}\xspace}
\newcommand{\FPT}{\text{\rm{FPT}}\xspace}
\newcommand{\Oof}{\mathcal O}
\newcommand{\tup}{\bar}
\newcommand{\CCC}{\mathcal{C}}
\newcommand{\LLL}{\mathcal{L}}
\newcommand{\sth}{\mathrel : }
\renewcommand{\le}{\leqslant}
\renewcommand{\ge}{\geqslant}
\renewcommand{\phi}{\varphi}
\renewcommand{\subset}{\subseteq}
\newcommand{\ind}[1][]{%
  \mathrel{
    \mathop{
      \vcenter{
        \hbox{\oalign{\noalign{\kern-.3ex}\hfil$\vert$\hfil\cr
              \noalign{\kern-1.5ex}
              $\vert$\cr\noalign{\kern-.3ex}}}
      }
    }\displaylimits_{#1}
  }
}
\newcommand{\nind}[1][]{%
  \mathrel{
    \mathop{
      \vcenter{
        \hbox{\oalign{\noalign{\kern-.3ex}\hfil$\nmid$\hfil\cr
              \noalign{\kern-1.5ex}
              $\vert$\cr\noalign{\kern-.3ex}              
              }}
      }
    }\displaylimits_{#1}
  }
}
\ifdefined\ipad{}
\author{\'{E}douard Bonnet\thanks{Univ Lyon, CNRS, ENS de Lyon, UCBL 1, LIP UMR5668, France, {edouard.bonnet@ens-lyon.fr}}
\and Jan Dreier\thanks{TU Wien, Austria, {dreier@ac.tuwien.ac.at}}
\and Jakub Gajarsk\'y\thanks{University of Warsaw, Poland, {gajarsky@mimuw.edu.pl}}
\and {Stephan Kreutzer}\thanks{{TU Berlin, Germany}, {kreutzer@tu-berlin.de}{}{}}
\and {Nikolas M\"ahlmann}\thanks{{University of Bremen, Germany}, {maehlmann@uni-bremen.de}{}{}}
\and {Pierre Simon}\thanks{{University of Berkeley, USA}, {pierre.simon@berkeley.edu}{}{}}
\and {Szymon Toru\'nczyk}\thanks{{University of Warsaw, Poland}, {szymtor@mimuw.edu.pl}}}
\title{Model Checking on Interpretations of Classes\\of Bounded Local Cliquewidth
\footnote{This research was initiated at the Dagstuhl workshop \emph{Sparsity in Algorithms, Combinatorics and Logic} (September 2021). We wish to thank the organizers and other participants.
E.B. was supported by the ANR projects TWIN-WIDTH (ANR-21-CE48-0014) and Digraphs (ANR-19-CE48-0013).
J.G. and S.T. were supported by the project BOBR that is funded from the European Research Council (ERC) under the European Union’s Horizon 2020 research and innovation programme (grant agreements No. 683080 and 948057, respectively).
N.M. was supported by the German Research Foundation (DFG) with grant greement No. 444419611.
}
}
\author{\'{E}douard Bonnet}{Univ Lyon, CNRS, ENS de Lyon, Université Claude Bernard Lyon 1, LIP UMR5668, France \and \url{http://perso.ens-lyon.fr/edouard.bonnet/}}{edouard.bonnet@ens-lyon.fr}{https://orcid.org/0000-0002-1653-5822}{}
\author{Jan Dreier}{TU Wien, Austria \and \url{https://www.ac.tuwien.ac.at/people/dreier/}}{dreier@ac.tuwien.ac.at}{https://orcid.org/0000-0002-2662-5303}{}
\author{Jakub Gajarsk\'y}{University of Warsaw, Poland \and \url{https://sites.google.com/view/jakubgajarsky/}}{gajarsky@mimuw.edu.pl}{}{}
\author{Stephan Kreutzer}{Logic and Semantics, TU Berlin, Germany}{kreutzer@tu-berlin.de}{}{}
\author{Nikolas M\"ahlmann}{University of Bremen, Germany}{maehlmann@uni-bremen.de}{}{}
\author{Pierre Simon}{University of Berkeley, USA}{pierre.simon@berkeley.edu}{}{}
\author{Szymon Toru\'nczyk}{University of Warsaw, Poland \and \url{https://www.mimuw.edu.pl/~szymtor/}}{szymtor@mimuw.edu.pl}{https://orcid.org/0000-0002-1130-9033}{}
\title{Model Checking on Interpretations of Classes of Bounded Local Cliquewidth}
\ifdefined\ipad{}
\titlerunning{Model Checking on Interpretations of Classes of Bounded Local Cliquewidth}
\authorrunning{\'E. Bonnet, J. Dreier, J. Gajarsk\'y, S. Kreutzer, N. M\"ahlmann, S. Toru\'nczyk}
\keywords{FO model checking, fixed-parameterized algorithms, interpretations, transductions, bounded local clique-width}
\begin{document}

\maketitle

\begin{abstract}
We present a fixed-parameter tractable algorithm for first-order model checking on interpretations of graph classes with bounded local cliquewidth.
Notably, this includes interpretations of planar graphs, and more generally, of classes of bounded genus.
To obtain this result we develop a new tool which works in a very general setting of dependent classes and which we believe can be an important ingredient in achieving similar results in the future.



\end{abstract}

\maketitle

\section{Introduction}

Algorithmic meta-theorems aim to explain the tractability of entire families of problems that can be specified in some logic.
The prime example is Courcelle's theorem \cite{courcelle90}, stating that every problem expressible in monadic second-order logic (\MSO) can be solved in linear time on every class of graphs with bounded treewidth.
In this paper, we follow a long line of research concerned with algorithmic meta-theorems for first-order logic (\FO), on restricted classes of graphs.
The central problem here is the first-order model checking problem, where one should decide whether a given \FO sentence $\phi$ holds in a given graph $G$.
A naive algorithm solves this problem in time $\Oof(|G|^{|\phi|})$ whereas no algorithm can solve it in time $|G|^{o(|\phi|)}$ in general, unless \textsc{SAT} admits a subexponential-time algorithm.
The main goal of this line of research is to identify classes of graphs for which the problem is
\emph{fixed-parameter tractable} (FPT), i.e., solvable in time $f(\phi)\cdot |G|^c$, for some constant $c$ and computable function $f\from\mathbb N\to\mathbb N$.
Henceforth we call such classes \emph{tractable}.
Courcelle's theorem gives such an algorithm even for the more powerful logic \MSO, on all classes of bounded treewidth.

The first result of this kind for \FO, proven by Seese~\cite{seese96},
states that \FO~model checking is \FPT on every class of graphs with bounded maximum degree.
This result is also the first application of the \emph{locality method}, utilizing the locality of first-order logic, as formalized for example by Gaifman's locality theorem.
Gaifman's theorem implies in particular that for two vertices $u,v$ of a graph $G$ (that are sufficiently far apart), whether or not $u$ and $v$ satisfy a fixed formula $\phi(x,y)$ can be determined by looking only 
at neighborhoods of bounded radius around $u$ and around $v$ in $G$.
The locality method was extended by Frick and Grohe~\cite{frickg01} who showed that 
if there is an \FPT algorithm for all classes $\CC$ satisfying a certain property $\cal P$ (where the exponent in the run time of the algorithm is the same for all $\CC\in\cal P$),
then this immediately implies the existence of such an \FPT algorithm for all classes $\CC$ that \emph{locally} have property $\cal P$.
A class $\CC$ has \emph{locally} property $\mathcal P$ if 
for every fixed radius $r$, the class of all $r$-balls of graphs from $\CC$ has property $\mathcal P$.
For example, a class $\CC$ has  locally bounded treewidth if there is a function $f\from\N\to\N$ such that 
for every $G\in \CC$ and vertex $v\in V(G)$, the subgraph of $G$ induced by the $r$-ball around $v$ 
has treewidth at most $f(r)$. Such classes are also said to have \emph{bounded local treewidth}.
Planar graphs, graphs of bounded genus, and more generally, apex-minor-free graphs, have bounded local treewidth, so \FO~model checking is \FPT~on all those classes, by the observation of Frick and Grohe combined with the result of Courcelle.
The locality method was subsequently combined with the graph minor theory of Robertson and Seymour,
to capture all classes that exclude a minor~\cite{flumg01},
or more generally, classes that locally exclude a minor~\cite{dawargk07}.

A new paradigm, based on sparsity theory
developed by Ne\v set\v ril and Ossona de Mendez, has allowed to obtain 
further, more general tractability results.
Dvo\v{r}\'{a}k, Kr\'{a}l and Tho\-mas
showed that \FO~model checking is {\FPT} for every class with \emph{bounded expansion}~\cite{dvorakkt10}.
And finally, Grohe, Kreutzer and Siebertz 
showed that the same holds for every \emph{nowhere dense} graph class~\cite{gks17}. 
Those include all the classes mentioned above.
See also \Cref{fig:sparse_inclusion_diagramm} for the relationship between these classes.

\begin{figure}[ht]
\centering
    \includegraphics[scale=0.75]{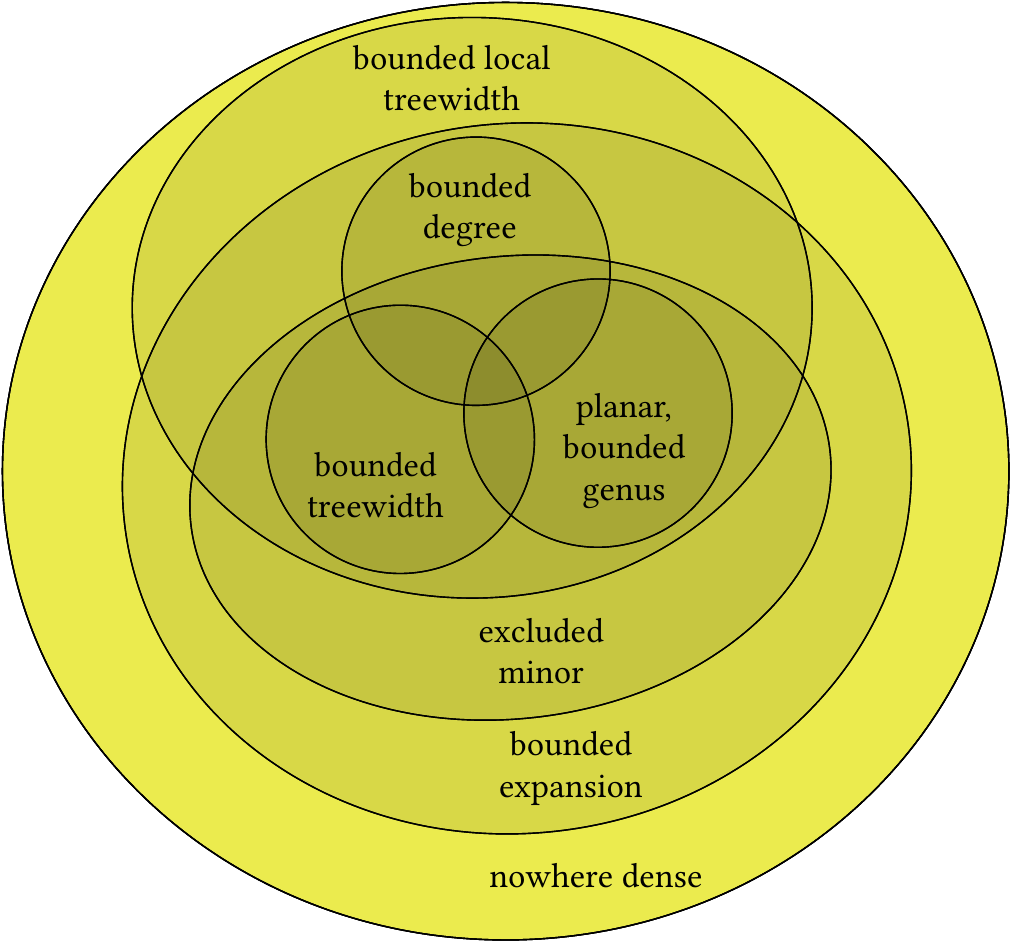}
    \caption{Inclusion diagram of selected monotone graph classes
        with FPT \FO model checking.}
    \label{fig:sparse_inclusion_diagramm}
\end{figure}

All graph classes we discussed so far are \emph{monotone}, i.e., closed under removing vertices and edges.
For sparse graph classes, monotonicity appears to be a reasonable assumption: after all, removing edges from a sparse graph should only make it even sparser.

For monotone graph classes, the aforementioned results are beautifully complemented by matching lower bounds.
MSO$_2$ model checking is not \FPT on monotone graph classes whose treewidth is at least polylogarithmic with respect to the number of vertices~\cite{kreutzer2010lower,GANIAN2014180}
and FO model checking is not \FPT on monotone graph classes that are not nowhere dense~\cite{dvorakkt10,kreutzer11}.
Thus, the aforementioned results yield a complete characterization of the monotone graph classes admitting \FPT model checking of \FO,
and an almost complete characterization of the monotone graph classes admitting \FPT model checking of $\MSO_2$.

However, this is far from the complete picture, as this says nothing about the tractability for \emph{dense} graph classes.
Simple examples of graph classes that are not monotone, but admit efficient FO model checking are the class of complete graphs, or more generally, the class of edge complements of graphs from a fixed nowhere dense class.
Those are not contained in any tractable monotone graph class, as every monotone graph class that contains cliques of unbounded size also contains all graphs.
Thus, to make further progress, we need a paradigm shift towards considering non-monotone graph classes and width measures.

\paragraph*{Dense Graph Classes.}
A graph class $\CC$ is \emph{hereditary} if $\CC$ is closed under taking \emph{induced} subgraphs, that is, under removing vertices.
Since we do not assume closure under edge removal, hereditary classes are well suited to capture dense graph classes.
After the question for monotone classes has been settled, the major next goal
is to characterize hereditary graph classes for which \FO model checking is FPT.
This is again inspired by results for \MSO model checking:
The result by Courcelle, Makowsky, and Rotics~\cite{cw}, combined with the result of Oum and Seymour~\cite{cw_rw}, shows that \MSO model checking is FPT on classes of \emph{bounded cliquewidth}. Cliquewidth is a generalization of the notion of treewidth to dense graphs. In particular, it is preserved by taking edge complements. 

Applying again the locality argument to classes of bounded cliquewidth yields the following result, originating in the work of Frick and Grohe~\cite{frickg01}.
Say that a class $\CC$ has \emph{bounded local cliquewidth} if there is a function $f\from\N\to\N$ such that for every number $r\in\N$, graph $G\in\CC$, and vertex $v\in V(G)$, the subgraph of $G$ induced by the $r$-ball around $v$ has cliquewidth at most $f(r)$.

\begin{theorem}\label{thm:loc-cw}
  Let $\CC$ be a class with bounded local cliquewidth. 
  Then \FO model checking is fixed-parameter tractable on $\CC$.
\end{theorem}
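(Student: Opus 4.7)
The plan is to combine Gaifman's locality theorem with the FPT algorithm of Courcelle--Makowsky--Rotics for FO model checking on classes of bounded cliquewidth, following exactly the recipe of Frick and Grohe~\cite{frickg01}.

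First I would invoke Gaifman's locality theorem to rewrite the input sentence $\phi$ as an effectively computable Boolean combination of basic local sentences of the form
\[
\exists x_1\ldots\exists x_k\;\Bigl(\bigwedge_{i\ne j}\dist(x_i,x_j)>2r\;\wedge\;\bigwedge_i\psi^{(r)}(x_i)\Bigr),
\]
where $\psi^{(r)}(x)$ is $r$-local around $x$, i.e.\ its truth at a vertex $v\in V(G)$ depends only on the induced subgraph $G[B_r(v)]$ together with the distinguished vertex $v$. The parameters $r$ and $k$ as well as the formulas $\psi^{(r)}$ depend only on $\phi$. Thus the algorithmic task is reduced to (i) computing, for each such local formula $\psi^{(r)}$, the set $A_\psi\subseteq V(G)$ of vertices at which $\psi^{(r)}$ holds, and (ii) deciding whether $A_\psi$ contains $k$ pairwise $>2r$-scattered vertices.

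For step (i), fix a vertex $v\in V(G)$ and consider the $r$-ball $H=G[B_r(v)]$. By the assumption of bounded local cliquewidth there is a function $f\from\N\to\N$ such that $\cw(H)\leq f(r)$. Using the approximation algorithm of Oum and Seymour, I would compute in FPT time a $(k,f(r))$-expression (or an equivalent rank-decomposition) witnessing cliquewidth bounded by some $g(r)$. Then the theorem of Courcelle--Makowsky--Rotics evaluates the FO (in fact MSO) sentence $\psi^{(r)}(v)$ on $(H,v)$ in time $h(\phi,r)\cdot |H|$ for some computable $h$. Iterating this over all $v\in V(G)$ produces $A_\psi$ in total time $h'(\phi)\cdot |G|^2$, which is FPT.

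For step (ii), I would use the standard greedy scattering argument: repeatedly pick a vertex $u\in A_\psi$ and remove $B_{2r}(u)\cap A_\psi$ from $A_\psi$; a scattered $k$-tuple exists iff this process runs for at least $k$ rounds. The BFS computations needed here contribute only a polynomial overhead. Finally, the Boolean combination is evaluated directly. The only real point requiring attention is that Gaifman's theorem yields a sentence whose size (hence the involved $r$ and $k$) depends only on $|\phi|$, so all the constants above collapse into a single computable function of $\phi$; the rest is bookkeeping. The main potential obstacle is hidden inside the Courcelle--Makowsky--Rotics black box, namely ensuring that a cliquewidth decomposition is actually obtainable in FPT time, which is precisely what Oum--Seymour provides.
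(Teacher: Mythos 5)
Your overall strategy --- Gaifman's locality theorem followed by the Courcelle--Makowsky--Rotics evaluation on the bounded-cliquewidth $r$-balls, with the Oum--Seymour approximation supplying the decompositions --- is precisely the Frick--Grohe recipe that the paper itself cites for this theorem (the paper does not reprove it, deferring to \cite{gro07}). The reduction to basic local sentences and your step (i) are both correct.

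Step (ii), however, contains a genuine error: the claim that a $>2r$-scattered $k$-set in $A_\psi$ exists \emph{iff} the greedy deletion process runs for $\ge k$ rounds fails in the ``only if'' direction. Take $A_\psi=\{a,b,c\}$ with $\dist(a,b)=\dist(a,c)=2r$ and $\dist(b,c)=4r$; greedy may pick $a$ first and delete all three vertices in a single round, even though $\{b,c\}$ is a valid $>2r$-scattered pair. Greedy is only a one-sided test: $\ge k$ rounds certify a scattered $k$-set, but $m<k$ rounds merely tell you that $A_\psi$ is covered by $m$ balls of radius $2r$. The missing idea --- and this is the real substance of the Frick--Grohe argument --- is to exploit that covering: pass to the induced subgraph on $N^G_{4r}[u_1,\ldots,u_m]$, observe that vertices of $A_\psi$ lying in distinct connected components of this subgraph are automatically at distance $>2r$ in $G$, that each component has diameter $\Oof(rk)$ and hence bounded cliquewidth, and then decide the scattered-set question inside each component by the cliquewidth algorithm and combine across components. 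As written, your algorithm can incorrectly answer ``no'', so it does not yet prove the theorem.
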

%
Currently, 
classes of bounded local cliquewidth are one of a few dense families
for which \FO model checking is known to be fixed-parameter tractable. 
However, there are many other graph classes that are conjectured to be tractable (for a more detailed discussion, see \Cref{fig:dense_inclusion_diagramm} and \cref{sec:discussion}).
Those include, in particular, classes that can be obtained from tractable classes,
using \FO formulas, as we make precise now.

\paragraph*{Interpretations.}
Let $\Sigma$ and $\Gamma$ be two signatures, where $\Gamma$ is relational.
A \emph{simple interpretation}
$\interp I\from \Sigma\to\Gamma$ (here, \emph{interpretation} for short) is specified by a \emph{domain} formula $\delta(x)$,
and one formula  $\phi_R(x_1,\ldots,x_k)$ for each relation symbol $R\in \Gamma$ of arity $k$, where all those formulas are in the signature $\Sigma$.
For a given $\Sigma$-structure $\str A$, the interpretation $\interp I$ 
outputs the structure $\interp I(\str A)$ whose domain is the set $\delta(\str A):=\setof{ a\in \str A} {\str A\models \delta(a)}$,
and in which the interpretation of each relation $R\in\Gamma$
of arity $k$ consists of those tuples $(a_1,\ldots,a_k)\in \delta(\str A)^k$ that satisfy $\str A\models \phi_R(a_1,\ldots,a_k)$.
Usually we will be working with interpretations that map graphs with expanded signatures to uncolored, undirected graphs, having a single binary relation $E$.
In this case,
we write $\interp I_{\phi,\delta}$ for the interpretation consisting of an irreflexive, symmetric formula $\phi(x,y)$ interpreting the edge relation $E$ and a domain formula $\delta(x)$.
If $\delta(x)$ is equal to  $x=x$, we will just write $\interp I_\phi$ instead.
For example, the interpretation $\interp I_\phi$ with $\phi(x,y)=\neg E(x,y)$  maps a given graph 
to its edge complement,
and the interpretation $\interp I_\psi$ with with $\psi(x,y)=E(x,y) \lor \exists z.E(x,z) \land E(z,y)$ maps a given graph to its square.

The notion of an interpretation lifts to classes of structures, for which we denote with
$\interp I(\mathcal{C}) := \setof{\interp I(G)}{G \in
\mathcal{C}}$ the result of applying the interpretation $\interp I$ to the class $\CC$.
Say that a class of structures $\CC$ \emph{interprets} a class of structures~$\DD$,
or that $\DD$ \emph{interprets in} $\CC$,
if there is an interpretation~$\interp I$ such that 
$\DD\subset \interp I(\CC)$.
Note that this notion depends on the chosen underlying logic, which will be either \FO or \MSO in our discussion.
We may write \emph{$\mathcal L$-interpretation} for interpretation when the underlying logic is $\mathcal L \in \{\FO,\MSO\}$. 
This  yields a transitive relation: if $\CC$ interprets $\DD$ and $\DD$ interprets $\cal E$, then $\CC$ interprets $\cal E$.

A class $\CC$ of graphs has bounded cliquewidth 
if and only if the class of trees $\MSO$-interprets $\CC$~\cite[Proposition 27]{bc06}.
In particular, bounded cliquewidth is preserved by \MSO interpretations. Moreover, 
we may view interpretations as a tool to extend model checking results from sparse to dense graph classes.
This invites the question, originally asked in~\cite{gajarsky_2020}, whether a similar statement holds for first-order logic.

\begin{question}\cite{gajarsky_2020}\label{qstn:transductions_preserve_tractability}
Let $\CC$ be a class admitting an {\FPT} algorithm for {\FO} model checking,
and $\DD$ be a class that {\FO}-interprets in~$\CC$. Does there exist 
an {\FPT} algorithm for {\FO} model checking on $\DD$?
\end{question}
The intuition underlying this question is that if a graph class $\CC$ is sufficiently well-behaved,
 then a fixed formula $\phi(x,y)$ should not be able to define 
 complicated graphs in graphs from $\CC$.

Thus, in particular, 
Question~\ref{qstn:transductions_preserve_tractability} 
suggests the existence of an {\FPT} algorithm for {\FO} model checking for any  class $\DD$ that interprets in some nowhere dense class $\CC$.
How could such an algorithm look like?
To unravel this question, fix an interpretation $\interp I$ such that $\DD\subset \interp I(\CC)$, where $\CC$ is the class of $k$-colored graphs from $\CC$.
Given a graph $G\in\DD$ and a first-order formula $\phi$ that we want to evaluate on $G$,
a possible strategy is to try to ``reverse the interpretation'' and compute
a graph $G'\in \CC$ such that $\interp I(G')=G$.
This process then yields a formula $\phi'$ such that $G \models \phi$ if and
only if $G' \models \phi'$.
Since $G'$ comes from a nowhere dense class, one can then evaluate in FPT time whether $G' \models \phi'$. 
However, reversing an interpretation seems to be a difficult task\footnote{For instance, it is NP-complete \cite{mms94} to decide whether 
a given graph is a square of some graph.}.
In this approach, we do not necessarily need to revert the interpretation $\interp I$ as described above --  there may be some other nowhere dense class $\CC'$ and interpretation $\interp I'$ that is easier to revert, such that $\DD\subset \interp I'(\CC')$. 

So far, only for classes which interpret in bounded degree classes the method outlined above has been applied successfully~\cite{gajarsky_2020}.
For the more general classes interpreting in bounded expansion classes, an FPT \FO model checking now boils down to efficiently computing so-called \emph{low shrubdepth covers}~\cite{GajarskyKNMPST20}, or Lacon or shrub decompositions~\cite{Dreier21}.

\paragraph*{Main result.}
In this paper, we extend the result of \cite{gajarsky_2020} significantly
by proving that {\FO} model checking is FPT 
for every class that interprets in a class with bounded local cliquewidth.
\begin{theorem}[Main result]\label{thm:mc_bounded_local_cliquewidth}
  Let $\CC$ be a graph class that interprets in a class of graphs  with bounded local cliquewidth.
  Then \FO model checking is fixed-parameter tractable on $\CC$:
  there exists a function $f$ and a constant $c$ such that for every first-order sentence $\phi$ and graph $G \in \cal C$
  one can decide in time $f(|\phi|)\cdot n^c$ whether $G \models \phi$.
  \end{theorem}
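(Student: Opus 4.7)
The plan is to reduce FO model checking on $\CC$ to the already-established Theorem~\ref{thm:loc-cw}. Fix once and for all an interpretation $\interp{I}$ and a class $\DD$ of bounded local cliquewidth with $\CC \subset \interp{I}(\DD)$. The naive strategy---recover some $G' \in \DD$ with $\interp{I}(G') = G$ from the input $G \in \CC$, then pull the formula $\phi$ back along $\interp{I}$---fails because reversing an interpretation is computationally infeasible in general, as the authors already emphasize. Instead, I would work with $G$ directly, exploiting the fact that $\CC$, as an \FO-interpretation of a dependent class, is itself dependent.

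First, I apply Gaifman's locality theorem to $\phi$. This reduces the query $G \models \phi$ to deciding, for a fixed radius $r$ depending only on $|\phi|$, a boolean combination of statements of the form ``there exist pairwise $2r$-far vertices $v_1, \ldots, v_k$ each satisfying a local formula $\psi_i$ of radius $r$.'' The task thus boils down to efficiently evaluating local $r$-formulas at individual vertices of $G$, plus a standard distance-based packing argument at the end.

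The core of the proof is the new tool, designed in the abstract setting of dependent classes, that for each $G \in \CC$, vertex $v$, and radius $r$ produces in \FPT time a \emph{local witness}: a colored structure $H_{v,r}$ whose cliquewidth is bounded by a function of $r$ and $\interp{I}$, together with a simple interpretation $\interp{J}$, such that the $r$-ball around $v$ in $G$ (enriched with the distance predicates relevant to Gaifman locality) equals $\interp{J}(H_{v,r})$. Conceptually, although a single edge of $G$ defined by $\interp{I}$ can bridge two arbitrarily distant vertices of a preimage in $\DD$, dependence of $\CC$ limits how much genuinely global information from $\DD$ can be encoded in any fixed-radius neighborhood of $G$: up to bounded ``side information'', such a neighborhood should be determined by a bounded-radius region of some preimage, which in turn has bounded cliquewidth. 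With $H_{v,r}$ in hand, the \FO model checking algorithm for classes of bounded cliquewidth (Courcelle--Makowsky--Rotics via \MSO, or directly Theorem~\ref{thm:loc-cw} applied to the uniform family of witnesses) evaluates the local formulas on $H_{v,r}$ in \FPT time, and I recombine the answers according to the Gaifman normal form of $\phi$.

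The main obstacle is constructing this local-witness tool. Bounded local cliquewidth is not preserved under \FO-interpretations (consider the edge complement of a planar graph), so I cannot hope for $G$ itself to inherit any local width bound, and the witness must be extracted from $G$ without any preimage in $\DD$ being available. Making the slogan ``only a bounded amount of global information from $\DD$ can affect a local \FO query on $\interp{I}(\DD)$'' quantitative, algorithmic, and tight enough to actually recover bounded cliquewidth---while certifiably preserving the truth of all local \FO formulas of radius up to $r$---is where I expect the entire technical weight of the paper to lie, and is the ``new tool which works in a very general setting of dependent classes'' advertised in the abstract.
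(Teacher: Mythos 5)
Your plan localizes the \emph{query} $\phi$ over the input graph $G$ via Gaifman, and then hopes to find, for each vertex $v$, a bounded-cliquewidth structure $H_{v,r}$ that interprets the $r$-ball of $G$ around $v$. This already founders on the examples you yourself cite. If $\DD$ is the class of planar graphs and $\interp I$ is edge complementation, then in $G = \interp I(G')$ the $2$-ball around any vertex is essentially all of $G$, and $G$ has unbounded cliquewidth. Since simple interpretations preserve bounded cliquewidth (Theorem~\ref{thm:cw_transductions}), any structure $H_{v,r}$ with $\cw(H_{v,r}) \le f(r)$ and $\interp J(H_{v,r})$ equal to that ball would force the ball itself to have bounded cliquewidth --- a contradiction. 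Adding distance predicates does not help, as the underlying graph is unchanged and unary annotations alter cliquewidth by only a bounded factor. So the local-witness object you posit simply cannot exist, and no amount of cleverness in constructing it can repair the argument: the problem is that you are doing Gaifman in the wrong metric. Distances in $G = \interp I(G')$ collapse, and $G$'s balls carry unbounded complexity; what actually has controlled local structure is the preimage's metric, not $G$'s.

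The paper's route is genuinely different. It applies Gaifman not to the query $\phi$ but to the \emph{interpretation} formula $\phi(x,y)$, in the metric of the unknown preimage $G' \in \DD$, obtaining a coloring of $V(G')$ such that far-apart (in $G'$) pairs have their $\phi$-relation determined by their colors (\Cref{lem:gaifman-coloring}). This coloring is not computable from $G$ alone, which is where the ``new tool'' (\Cref{lem:super-coloring-intro}) enters: using NIP/VC-dimension and the $(p,q)$-theorem, one shows there is a \emph{small} set $S \subset V(G)$ such that the coloring ``adjacency profile to $S$ in $G$'' already determines the long-range edges. Crucially, this coloring \emph{is} computable from $G$ (once $S$ is known, and $|S|\le s$ so $S$ can be brute-forced). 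The algorithm then performs a single global \emph{flip} of $G$ along this coloring, guided by a guessed relation $R$, producing a graph $F(H)$. For the correct guess, $F(H) = \interp I_\psi(\wh G)$ where $\psi$ has bounded range, hence $F(H)$ lies in a class of bounded local cliquewidth (\Cref{lem:loc-bdcw}) --- in $F(H)$'s own metric, which is now comparable to $G'$'s. One then delegates the entire model checking problem, with a modified query $\rho_{S,R}$ that undoes the flip inside the formula, to the known conservative FPT algorithm for bounded local cliquewidth, running all $(S,R)$ guesses in parallel. There is no per-vertex witness and no direct recovery of a preimage; the preimage's metric enters only implicitly, through the bounded-range guarantee on $\psi$.

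In short: you correctly identified that the difficulty is recovering local structure without a preimage, but the mechanism you propose (local interpretations of $G$'s balls) is impossible, and you miss the two load-bearing ideas of the actual proof --- the flip construction that repairs $G$ into a bounded-local-cliquewidth graph, and the VC-duality argument that makes the coloring guiding the flip enumerable from $G$ alone.
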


\begin{figure}[t]
\centering
  \includegraphics[scale=0.75]{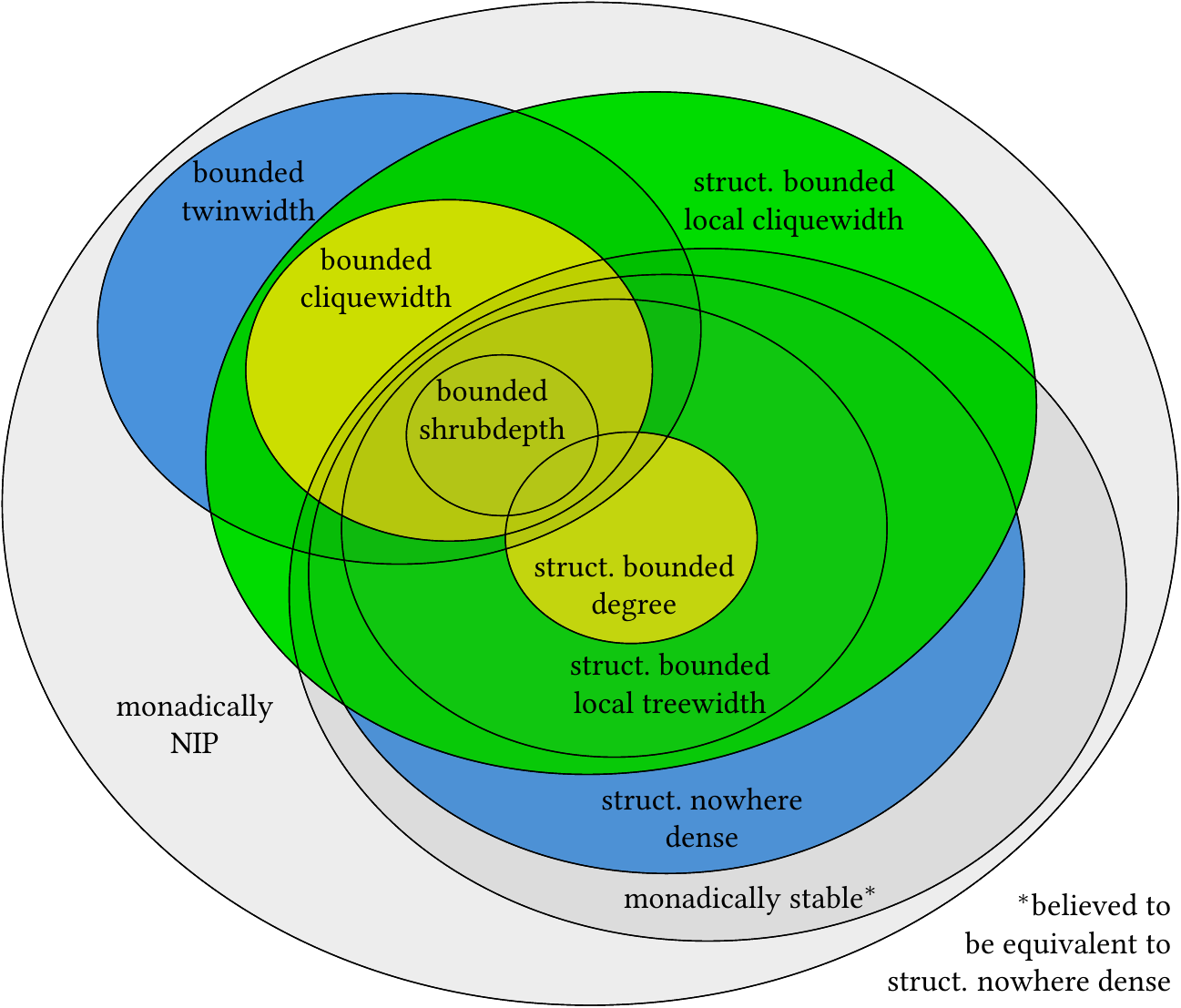}
  \caption{Inclusion diagram of selected transduction ideals, that is,
  properties of graph classes that are closed under transductions (that is, under interpretations of colorings of the graphs from the class). 
  Yellow transduction ideals were previously known to admit an FPT \FO model checking algorithm.
  Green transduction ideals admit an FPT \FO model checking algorithm, as presented in this paper.
  Blue ideals admit an FPT \FO model checking algorithm, assuming an appropriate decomposition is given as part of the input.
  Uncolored means unknown.
  The relevant notions are discussed in \Cref{sec:discussion}.
  }
  \label{fig:dense_inclusion_diagramm}
\end{figure}

Thus, we make progress towards answering \cref{qstn:transductions_preserve_tractability}, by answering it positively in the case of interpretations of classes with bounded local cliquewidth.
See \Cref{fig:dense_inclusion_diagramm} for an overview on how our result relates to previous results.
We remark that besides being much more general, our proof is also much simpler than 
the proof in \cite{gajarsky_2020}.
As we explain in the proof outline below, our main lemma applies to much more general classes than just classes of bounded local cliquewidth -- namely to all \emph{NIP classes} -- yielding a more general theorem than \cref{thm:mc_bounded_local_cliquewidth}
(see \cref{thm:local_iff_global}).
We proceed with a proof outline in \Cref{sec:outline},
followed by the actual proofs and then an extended 
discussion in \Cref{sec:discussion}, comparing our results to existing results.

\section{Proof outline}\label{sec:outline}
In this section, we sketch the proof of \cref{thm:mc_bounded_local_cliquewidth}.
This proof outline is not complete,
and for simplicity of the description assumes interpretations 
in which the domain formula $\delta(x)$ holds for all $x$.
For a complete proof 
see \cref{sec:main-lemma,sec:main-proof}.

We first describe a possible proof strategy for proving \cref{thm:mc_bounded_local_cliquewidth},
outlined in \cite{gajarsky2020differential}, in order to isolate the main obstacle.
The following lemma is an immediate consequence of Gaifman's locality theorem~\cite{gaifman82}.
By $\dist^G(u,v)$ denote the distance between two vertices $u$ and $v$ in a graph~$G$.
\begin{lemma}\label{lem:gaifman-coloring}
  Let $\phi(x,y)$ be an \FO formula. Then there are numbers $r,t\in \N$ 
  such that every graph $G$ can be vertex-colored using $t$ colors 
  in such a way that for any two vertices $u,v\in V(G)$ with $\dist^G(u,v)>r$,
  whether or not $\phi(u,v)$ holds depends only on the color of $u$ and the color of~$v$.
\end{lemma}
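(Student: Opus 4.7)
The plan is to derive the lemma as a direct application of Gaifman's locality theorem, combined with the observation that far-apart vertices can be analyzed independently.

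First, I would apply Gaifman's normal form to $\phi(x,y)$. This gives an equivalent formula that is a Boolean combination of:
(i) $r_0$-local formulas $\psi^{(r_0)}(x,y)$ (whose quantification is restricted to the $r_0$-neighborhood of $\{x,y\}$), and
(ii) basic local sentences (whose truth depends only on $G$, not on $x,y$). Here $r_0$ and the finite list of local formulas and basic local sentences depend only on $\phi$. Set $r := 2r_0$; the point of this choice is that whenever $\dist^G(u,v) > r$, the balls $B_{r_0}(u)$ and $B_{r_0}(v)$ are vertex-disjoint and induce no edges between each other, so $\psi^{(r_0)}(x,y)$ is evaluated in the disjoint union of the two local neighborhoods.

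The key step is to decouple the local formulas on $x$ and on $y$. For each $r_0$-local formula $\psi^{(r_0)}(x,y)$ appearing in the Gaifman normal form, a standard Feferman--Vaught style argument on disjoint unions shows that, restricted to pairs $(u,v)$ with $\dist^G(u,v) > r$, $\psi^{(r_0)}(u,v)$ is equivalent to a Boolean combination of $r_0$-local formulas in a single free variable, namely $\alpha_i^{(r_0)}(x)$ evaluated at $u$ and $\beta_j^{(r_0)}(y)$ evaluated at $v$. Collecting all single-variable local formulas that arise this way across the Gaifman normal form yields a finite list $\chi_1^{(r_0)}(x), \ldots, \chi_s^{(r_0)}(x)$, with $s$ depending only on $\phi$.

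Now I would define the coloring. Color each vertex $u \in V(G)$ by its local pattern $c(u) := (\chi_1^{(r_0)}(u), \ldots, \chi_s^{(r_0)}(u)) \in \{0,1\}^s$, using $t := 2^s$ colors. Since all basic local sentences appearing in the Gaifman normal form are evaluated in $G$ and take fixed truth values, and since for $\dist^G(u,v) > r$ the truth of $\phi(u,v)$ is, by construction, a fixed Boolean combination of $c(u)$, $c(v)$, and these $G$-constants, it follows that $\phi(u,v)$ depends only on $c(u)$ and $c(v)$.

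I do not expect a serious obstacle: the only mildly technical step is the disjoint-union argument that rewrites the two-variable $r_0$-local formula as a Boolean combination of one-variable local formulas, but this is a standard consequence of quantifier elimination for disjoint unions applied to the two disjoint $r_0$-balls. Everything else is bookkeeping: counting the finitely many local formulas produced by Gaifman's theorem, closing them under Boolean combinations to bound $s$ (and hence $t$), and setting $r = 2r_0$.
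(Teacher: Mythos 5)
Your route---Gaifman normal form followed by a Feferman--Vaught decomposition over the disjoint union of the two $r_0$-balls, then coloring each vertex by its local type---is the natural one, and the paper offers no proof of its own, stating only that the lemma is \emph{an immediate consequence of Gaifman's locality theorem}. The argument is sound except for a small off-by-one: with $r := 2r_0$, vertices $u,v$ with $\dist^G(u,v) = 2r_0 + 1 > r$ have vertex-disjoint $r_0$-balls, but the induced subgraph $G[B_{r_0}(u) \cup B_{r_0}(v)]$ may still contain an edge joining the two balls (take a geodesic of length $2r_0+1$ from $u$ to $v$ and its two middle vertices), so it need not be a disjoint union and your Feferman--Vaught step does not apply at that distance. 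Taking $r := 2r_0 + 1$ repairs this; everything else---including absorbing the $G$-dependent truth values of the basic local sentences into the fixed Boolean combination, which is fine because the lemma permits the adjacency pattern on colors to depend on $G$---goes through as you describe.
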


Rephrasing, the conclusion of \cref{lem:gaifman-coloring}
says that there is a formula $\alpha(x,y)$, which is a Boolean combination 
of checks of the colors of $x$ and $y$,
and is such that the formula $\psi(x,y):=\phi(x,y)\oplus \alpha(x,y)$ 
has \emph{range $\le r$}, that is, 
for every graph $G$ and vertices $u,v\in V(G)$, 
if $\psi(u,v)$ holds then $\dist^G(u,v)\le r$. Here, $\oplus$ denotes the exclusive or.

This has the following consequence, observed in~\cite{gajarsky2020differential,nesetril2021structural}.
If $G$ is a graph and $X,Y\subset V(G)$ are sets of vertices of $G$,
then doing a \emph{flip} between $X$ and $Y$ yields a new graph where the adjacency of all pairs $x\in X$ and $y\in Y$ is inverted:
adjacent pairs become non-adjacent, and vice-versa.
\begin{corollary}[\cite{gajarsky2020differential,nesetril2021structural}]
  \label{flip-cor}
  For every formula $\phi(x,y)$ there are $r,t\in\N$ and a formula
  $\psi(x,y)$ of range $\le r$ such that for every graph~$G$, the graph $\interp I_\phi(G)$ can be obtained from the graph $\interp I_\psi(G)$ by performing flips between $t$ pairs of sets.
\end{corollary}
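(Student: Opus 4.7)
The plan is to derive the corollary directly from Lemma~\ref{lem:gaifman-coloring} by turning the conclusion of that lemma into an explicit formula $\psi$ and reading off the pairs of sets to flip from the color classes produced by the lemma.

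First I apply Lemma~\ref{lem:gaifman-coloring} to $\phi(x,y)$ to obtain $r,t_0 \in \N$ and, for each graph $G$, a coloring $c_G \colon V(G) \to [t_0]$ such that whether $\phi(u,v)$ holds on a long-range pair is determined by $(c_G(u),c_G(v))$. After possibly enlarging the palette, I may assume that this dependence is witnessed by a single Boolean function $B \colon [t_0]\times[t_0]\to\{0,1\}$ which is the same for every $G$: if necessary, one enriches the color of a vertex with the finite amount of information identifying which slice of $B$ to use, at the price of replacing $t_0$ by some $t_0' \le t_0 \cdot 2^{t_0^2}$. This uniformity is the one mildly delicate point of the argument.

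Next, I treat the coloring as an expansion $G^+$ of $G$ by unary predicates $C_1,\ldots,C_{t_0}$ with $C_i(v) \Leftrightarrow c_G(v)=i$, fitting the paper's convention that interpretations may act on colored graphs. I define
$$\alpha(x,y)\;:=\;\bigvee_{(i,j)\,:\,B(i,j)=1} C_i(x)\land C_j(y),\qquad \psi(x,y)\;:=\;\phi(x,y)\oplus \alpha(x,y).$$
For any $u,v \in V(G)$ with $\dist^G(u,v)>r$, the conclusion of the lemma gives $\phi(u,v)\leftrightarrow \alpha(u,v)$, so $\psi(u,v)$ is false. Hence $\psi$ has range at most $r$, as required.

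Finally, I decompose the edge symmetric difference into flips. Setting $S_i^G := c_G^{-1}(i)$, the construction of $\alpha$ yields
$$E\bigl(\interp I_\phi(G)\bigr)\ \triangle\ E\bigl(\interp I_\psi(G^+)\bigr)\ =\ E\bigl(\interp I_\alpha(G^+)\bigr)\ =\bigcup_{(i,j)\,:\,B(i,j)=1}\bigl(S_i^G\times S_j^G\bigr),$$
so $\interp I_\phi(G)$ is obtained from $\interp I_\psi(G^+)$ by flipping the pair $(S_i^G,S_j^G)$ once for each $(i,j)$ with $B(i,j)=1$. Taking $t := t_0^2$ bounds the number of flips and concludes the proof. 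The rest is bookkeeping: the ``hard work'' is done entirely inside Lemma~\ref{lem:gaifman-coloring}, which itself packages Gaifman's locality theorem.
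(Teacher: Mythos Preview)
Your argument is correct and follows the same route as the paper: apply Lemma~\ref{lem:gaifman-coloring}, set $\psi=\phi\oplus\alpha$ with $\alpha$ a Boolean combination of color checks, and flip between pairs of color classes. The paper's own proof is the two-sentence sketch immediately following the corollary, and you have simply unfolded it.

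One remark on your ``mildly delicate point'': the palette enlargement to force a uniform $B$ is unnecessary. Lemma~\ref{lem:gaifman-coloring} is a direct consequence of Gaifman's theorem, and in that derivation the colors are local types of bounded quantifier rank; the dependence of $\phi(u,v)$ on these types for far-apart $u,v$ is then given by a \emph{fixed} Boolean combination determined by $\phi$ alone, not by $G$. So $B$ is already uniform, and your bound $t=t_0^2$ holds without the factor $2^{t_0^2}$.
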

To see this, perform a flip for every pair of color classes $C,D$ (as given by \cref{lem:gaifman-coloring}) such that $\phi(u,v)$ holds for some $u\in C$ and $v\in D$ 
with $\dist^G(u,v)>r$.
So the $t$ in \cref{flip-cor} is in fact at most the square of the $t$ obtained from \cref{lem:gaifman-coloring}.

Now, suppose we are given a class $\CC$ with bounded local cliquewidth
and an interpretation $\interp I_\phi$, for some \FO formula $\phi(x,y)$, 
and want to solve the model checking problem on the class $\interp I_\phi(\CC)$.
In this problem, we are given as input a graph of the form $\interp I_\phi(G)$, for some $G\in \CC$ which is unknown, and a sentence $\alpha$, and are to determine whether $\interp I_\phi(G)$ satisfies~$\alpha$.

Let $\psi$ be as in \cref{flip-cor}.
As $\CC$ has bounded local cliquewidth and $\psi(x,y)$ has range $\le r$, 
it is not difficult to prove that $\interp I_\psi(\CC)$ is again a class with bounded local cliquewidth (this relies on the fact that classes with bounded cliquewidth are closed under \FO-interpretations,
and is proved in \cref{lem:loc-bdcw}). Hence, \FO model checking can  efficiently be solved on the graph $\interp I_\psi(G)$ as given by \cref{flip-cor}.
To model check the sentence $\alpha$ on $\interp I_\phi(G)$
it is enough to model-check another sentence $\alpha'$ on the graph $\interp I_\psi(G)$ expanded 
with unary predicates marking the $t$ pairs of sets that need to be flipped 
to obtain $\interp I_\psi(G)$ from $\interp I_\phi(G)$. Here we use the fact that the same flips can be used to recover $\interp I_\phi(G)$ from $\interp I_\psi(G)$, and the flipping process can be simulated by $\alpha'$.

To summarize, to determine whether $\interp I_\phi(G)$ satisfies $\alpha$,
it suffices to determine whether $\interp I_\psi(G)$ (with additional unary predicates) satisfies $\alpha'$, and this can be done efficiently since $\interp I_\psi(\CC)$ has bounded local cliquewidth. And moreover $\interp I_\psi(G)$ can be obtained from $\interp I_\phi(G)$ by performing $t$ flips between pairs of sets. The problem with this approach is: how to determine the $t$ pairs of sets that need to be flipped in order to obtain $\interp I_\psi(G)$ from $\interp I_\phi(G)$?
Lemma~\ref{lem:gaifman-coloring} allows us to find those sets when $G$ is given, but not when $\interp I_\phi(G)$ is given.

Our main lemma overcomes this difficulty by proving a version of Corollary~\ref{flip-cor} in which the $t$ flips can be efficiently computed, given $\interp I_\phi(G)$.
Before we can state it, we will need the following fundamental notions originating from learning theory.

\paragraph*{VC-dimension and NIP classes.}
Say that a formula $\phi(x,y)$ has \emph{VC-dimension} at least $N$ on a structure $G$ if there exist elements $v_i$ for $i=1,\ldots,N$ and  $w_I$ for $I\subset \set{1,\ldots,N}$ 
such that $\phi(v_i,w_I)$ holds if and only if $i\in I$, for all $i=1,\ldots,N$ and $I\subset \set{1,\ldots,N}$.
See also \Cref{fig:VC}.
\begin{figure}
\centering
    \includegraphics{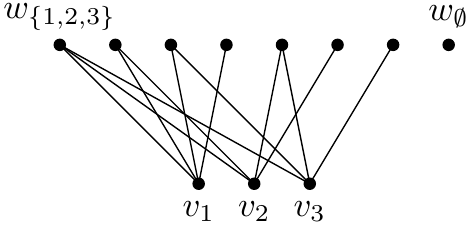}
    \caption{A graph for which the formula $\phi(x,y) = E(x,y)$ has VC-dimension at least three.}
    \label{fig:VC}
\end{figure}

A class $\CC$ of structures is \emph{NIP} (or \emph{dependent})
if for every first-order formula\footnote{In the original definition \cite{SHELAH1971271}, this condition is required for formulas $\phi(\tup x,\tup y)$, where $\tup x$ and $\tup y$ are tuples of variables. However, our proofs work with the weaker assumption.} $\phi(x,y)$ there is some constant $N$ 
such that the VC-dimension of $\phi$ on $G$ is less than $N$ for every ${G\in \CC}$.
Every class with bounded local cliquewidth is NIP~\cite{Grohe2003LearnabilityAD}. 
There are many other known NIP classes $\CC$,
such as all no\-where dense classes, and more generally, all \emph{monadically NIP classes}
(see \cref{sec:discussion}).

\paragraph*{Main lemma.}
We are now ready to state our main technical lemma, in a form that parallels \cref{lem:gaifman-coloring}.

\begin{lemma}[Main lemma]\label{lem:super-coloring-intro}
  Let $\CC$ be a class of graphs and 
  let $\phi(x,y)$ be an \FO formula that has bounded VC-dimension on~$\CC$. 
  Then there are numbers $s,r\in \N$ 
  such that for every $G\in\CC$ there is a 
set $S\subset V(G)$ of size at most $s$
such that for any two vertices $u,v\in V(G)$ with $\dist^G(u,v)>r$,
  whether or not $\phi(u,v)$ holds, depends only on $\phi(u,S)$ and $\phi(S,v)$.
\end{lemma}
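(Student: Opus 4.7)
The plan is to combine Gaifman's locality theorem with a selection argument driven by the bounded VC-dimension hypothesis.

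First, I would apply Gaifman's locality theorem to $\phi(x,y)$ to obtain a radius $r\in\N$, finitely many $r$-local unary formulas $\chi_1,\dots,\chi_t$, and a Boolean function $B\colon\{0,1\}^{2t}\to\{0,1\}$ such that for every graph $G$ and every $u,v\in V(G)$ with $\dist^G(u,v)>r$,
\[
\phi(u,v)\iff B\bigl(\chi_1(u),\dots,\chi_t(u),\chi_1(v),\dots,\chi_t(v)\bigr).
\]
Write $\tau(u):=(\chi_1(u),\dots,\chi_t(u))\in\{0,1\}^t$ for the \emph{Gaifman type} of $u$; the image of $\tau$ has size at most $2^t$, independent of~$G$.

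The core observation is that if, for every Gaifman type $\sigma$ realised in $G$, the set $S$ contains three vertices of type $\sigma$ that are pairwise at distance greater than $2r$, then for any two vertices $u_1,u_2\in V(G)$ and any realised $\sigma$, at least one of those three representatives lies at distance greater than $r$ from both $u_1$ and $u_2$ (otherwise two of them would fall inside the common $r$-ball around some $u_i$ and hence lie within distance $2r$ of each other). Consequently, from $\phi(u_1,S)$ one can read off $B(\tau(u_1),\sigma)$ for every realised $\sigma$, and symmetrically from $\phi(S,v_1)$ one recovers $B(\sigma,\tau(v_1))$ for every realised $\sigma$. For $u_1,u_2,v_1,v_2$ as in the lemma this yields
\[
\phi(u_1,v_1)=B(\tau(u_1),\tau(v_1))=B(\tau(u_2),\tau(v_1))=B(\tau(u_2),\tau(v_2))=\phi(u_2,v_2),
\]
which is precisely the conclusion.

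The task therefore reduces to producing, for each realised Gaifman type $\sigma$, a bounded subset $S_\sigma$ of $U_\sigma:=\{v\in V(G):\tau(v)=\sigma\}$ that plays the role of the triple above. If $U_\sigma$ contains three vertices at pairwise distance greater than $2r$, take any three; otherwise $U_\sigma$ lies in the union of at most two closed balls of radius $2r$, and in this ``locally concentrated'' regime I would invoke the bounded VC-dimension of $\phi$: the family of useful witnesses $\{v\in U_\sigma:v\notin B_r(u_1)\cup B_r(u_2)\}$, indexed by pairs $(u_1,u_2)\in V(G)^2$, has bounded VC-dimension, so a Sauer--Shelah or $\epsilon$-net style argument should extract a bounded transversal. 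Summing over the at most $2^t$ realised Gaifman types then yields the desired constant~$s$.

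I expect the main obstacle to be exactly this final step: turning the qualitative bounded VC-dimension assumption into a quantitative, $G$-uniform hitting-set bound in the locally concentrated regime, where all type-$\sigma$ vertices may be crowded into the $r$-neighbourhoods of two adversarially chosen points. Unlike the clean triangle-inequality argument of the spread case, this requires genuine combinatorial content from the NIP hypothesis---possibly strengthened to the full monadically NIP setting of the intended application, under which Boolean combinations of $\phi$-instances with distance-bounded formulas remain NIP---and this is where I anticipate the heart of the technical work to sit.
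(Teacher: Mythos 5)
Your framework is correct in outline and matches the paper's: Gaifman's theorem to get a Boolean function $B$ of local types, then build $S$ by choosing, for each type, either three pairwise $2r$-far representatives (the ``spread''/large case, which you handle correctly) or something else for the ``concentrated''/small case. The paper also enlarges $r$ (it uses $5r'$ where $r'$ is the Gaifman radius) and splits any class that is neither large nor small into two small pieces. Your reduction of the spread case to the chain of $B$-equalities is sound. But your proposed resolution of the concentrated case is the wrong tool, and this is not a minor technicality --- it is the actual content of the lemma.

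Two concrete problems with your concentrated-case plan. First, you want to bound the VC-dimension of the set system $\{\,U_\sigma \setminus (B_r(u_1)\cup B_r(u_2)) : (u_1,u_2)\in V(G)^2\,\}$. But the hypothesis only bounds the VC-dimension of $\phi$; the sets you consider are defined purely by graph-metric balls, and nothing in ``$\phi$ has bounded VC-dimension'' controls the shattering behaviour of distance balls in a general NIP class. (Distance is definable, but Shelah's theorem that NIP is preserved under Boolean combinations of NIP formulas requires NIP to hold for \emph{all} formulas of the theory, not merely the one $\phi$ you are given; the lemma's hypothesis is weaker.) Second, and more fundamentally, even if such a transversal existed it would not help. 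If $U_\sigma$ is covered by two $2r$-balls and the adversary chooses $u_1,u_2$ to be the two centres, then $U_\sigma \setminus (B_r(u_1)\cup B_r(u_2))$ is simply empty --- there is no far witness to be hit. The ``read off $B(\tau(u_1),\sigma)$ from $\phi(u_1,S)$ via a far representative'' strategy genuinely collapses here, so a different argument is required, not a quantitative strengthening of the same one. What the paper actually does: for each ordered pair of colour classes $(C,D)$ it applies a duality statement (the $(p,q)$-theorem of Matou\v sek, a consequence of bounded VC-dimension) \emph{to the relation $\phi$ restricted to $C\times D$}, obtaining a set $S_{CD}\subset C\cup D$ of bounded size such that either every $c\in C$ has a $\phi$-neighbour in $S_{CD}\cap D$, or every $d\in D$ has a $\phi$-non-neighbour in $S_{CD}\cap C$. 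It also adds a centre $c_0$ of each small class to $S$. The contradiction in the small case is then obtained not by locating a far representative, but by a chain of distance bounds: from $\neg\phi(c_0,v)$ one derives $\dist(u',v)\le 3r$, and from the duality set one derives $\dist(u,u')\le 2r$, giving $\dist(u,v)\le 5r$ and contradicting the choice of radius. So the missing idea is precisely the $(p,q)$-theorem applied to $\phi$ itself on pairs of colour classes, together with a genuinely different, distance-chasing argument for the concentrated case; your ``bounded transversal of far witnesses'' plan cannot be repaired into this.
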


Here, $\phi(u,S):=\setof{w\in S}{G\models \phi(u,w)}$, and $\phi(S,v)$ is defined symmetrically. In particular, if $\phi(x,y)\leftrightarrow \phi(y,x)$ holds, as is the case when considering formulas that define graphs, 
then we have that $\phi(u,S)=\phi(S,u)$. In what follows, we assume that $\phi(x,y)\leftrightarrow\phi(y,x)$ holds.

Given a set $S\subset V(G)$ define a coloring of $V(G)$ that colors a given $v\in V(G)$ with the set $\phi(v,S)\subset S$. This coloring then uses at most $2^s$ 
colors, and is moreover definable in a straightforward way in the graph $\interp I_\phi(G)$,
by looking at the adjacencies between a given vertex and the vertices in $S$.
The conclusion of the lemma says that for all $u$ and $v$ with ${\dist^G(u,v)>r}$,
whether or not $\phi(u,v)$ holds, depends only on the color of $u$ and the color of $v$, that is, there is some binary relation $R\subset 2^S\times 2^S$ such that 
$G\models\phi(u,v)$ if and only if the pair formed by the colors of $u$ and $v$ belongs to~$R$.
Hence, \cref{lem:super-coloring-intro} can be seen as a variant of \cref{lem:gaifman-coloring}, where the coloring can moreover be efficiently computed, given the graph $\interp I_\phi(G)$ and the set $S$.



\paragraph*{Main algorithm.}
Using \cref{lem:super-coloring-intro}, we can now solve the model checking problem 
on $\interp I_\phi(\CC)$, essentially in the way that was outlined above.
More precisely, the algorithm works as follows. Given a graph $\interp I_\phi(G)\in \interp I_\phi(\CC)$ 
and an \FO sentence~$\alpha$,
 in parallel for every set $S\subset V(G)$ with $|S|\le s$,
 and every binary relation $R\subset 2^S\times 2^S$,
 do the following.
 \begin{enumerate}
  \item Compute the coloring of $V(G)$ as described above, using $2^S$ colors.
  \item Compute the graph $\interp I_\psi(G)$ by performing flips 
  between any pair of color classes such that belongs to $R$.
  \item Check whether $\interp I_\psi(G)$ expanded with unary predicates marking the flipped sets,  
  satisfies $\alpha'$, where $\alpha'$ is the formula that first recovers the graph $\interp I_\phi(G)$ by undoing the  flips, and then tests whether $\interp I_\phi(G)$ satisfies $\alpha$.
\end{enumerate}
 Whenever one of the parallel executions terminates, terminate with the same  answer.

There is one technicality on which the proof of correctness of the above algorithm hinges. We do not know which of the parallel executions 
involves the ``correct'' set $S$ and relation $R$ resulting in a graph 
that belongs to a class of bounded local cliquewidth, but we know, by \cref{lem:super-coloring-intro}, that one of them does.
So how do we know that we will receive a correct answer in the required running time?

First, we use the fact that interpretations with bounded-range formulas preserve classes with bounded local cliquewidth (\Cref{lem:loc-bdcw}).
Second, we know that for every class $\DD$ with bounded local cliquewidth
there is a model checking  algorithm that is guaranteed to be efficient on graphs from $\DD$ only,
but yields correct answers for all graphs (see \cref{thm:mc-local-cw}).
By applying this algorithm in parallel we are therefore guaranteed to efficiently get a correct answer.
This completes the sketch of the proof of the main theorem, \cref{thm:mc_bounded_local_cliquewidth}, using the main lemma.
The details are presented in \cref{sec:main-proof}.

\paragraph*{Proof of main lemma.}
We now outline the proof of the main lemma.
See \cref{sec:main-lemma} for the complete argument.
We use the following fundamental result based on the 
$(p,q)$-theorem \cite{Matousek:2004:BVI:1005787.1005789}
(see \cref{thm:pq} below).

\begin{theorem}\label{thm:vc-duality}
  For every $d$ there is a number $k$ such that for every binary relation  $E\subset A\times B$ of VC-dimension at most $d$, one of two cases holds:
  \begin{itemize}
    \item there is a set $A'\subset A$ with $|A'|\le k$, such that for every $b\in B$ there is $a\in A'$ with $E(a,b)$, or 
    \item there is a set $B'\subset B$ with $|B'|\le k$, such that for every $a\in A$ there is $b\in B'$ with $\neg E(a,b)$.
  \end{itemize}
\end{theorem}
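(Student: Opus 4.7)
The plan is to derive \cref{thm:vc-duality} as a direct application of the $(p,q)$-theorem for set systems of bounded VC-dimension (\cref{thm:pq}). I view $E$ through the family $\mathcal{F} := \{N(b) : b \in B\}$ of ``column neighborhoods'' $N(b) := \{a \in A : E(a,b)\} \subset A$. By the definition of VC-dimension used in the paper, $\mathcal{F}$ has VC-dimension at most $d$. In this language, Case~1 asserts that $\mathcal{F}$ admits a \emph{transversal} of size at most $k$ (a set $A' \subset A$ meeting every $N(b)$), while Case~2 asserts that some subfamily of $\mathcal{F}$ of size at most $k$ has empty common intersection.

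The key observation is that the failure of Case~2 for threshold $k$ says exactly that every subfamily of $\mathcal{F}$ of size at most $k$ has nonempty common intersection. Matoušek's $(p,q)$-theorem (\cref{thm:pq}), specialized to $q = d+1$, provides constants $p_0 = p_0(d)$ and $k^\star = k^\star(d)$ such that every family of VC-dimension $\le d$ with the $(p_0, d+1)$-property admits a transversal of size at most $k^\star$. Set $k := \max(p_0, k^\star)$. Assuming Case~2 fails for this $k$, every $k$ members of $\mathcal{F}$ have a common point, so a fortiori every $p_0 \le k$ members do, which trivially implies the $(p_0, d+1)$-property. Matoušek's theorem then supplies a transversal of $\mathcal{F}$ of size at most $k^\star \le k$, which is exactly the conclusion of Case~1.

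The only real subtlety is arranging for a single constant $k$ to serve simultaneously as the threshold in the assumption (``Case~2 fails for this $k$'') and as the upper bound on the witness in the conclusion (``Case~1 holds with this $k$''); this is handled by the choice $k := \max(p_0, k^\star)$, where both quantities depend only on $d$. A degenerate case occurs when $|B| < p_0$, so that the hypothesis of the $(p,q)$-theorem is vacuous: here the failure of Case~2 applied to $B' = B$ itself immediately produces some $a \in \bigcap_{b \in B} N(b)$, and then $A' = \{a\}$ witnesses Case~1. The entire proof is thus essentially a direct reduction to \cref{thm:pq}, and I expect no significant obstacle beyond correctly setting up this reduction and verifying that the VC-dimension of $\mathcal{F}$ is indeed bounded by $d$ under the paper's convention (rather than by its dual, which would only cost an inessential exponential blow-up via Assouad's bound).
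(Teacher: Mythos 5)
Your proof is correct and relies on the same core tool as the paper's, namely Matou\v{s}ek's $(p,q)$-theorem for set systems of bounded VC-dimension. The one thing to flag is a mismatch between what you cite and what you actually use: you invoke \cref{thm:pq}, but you apply the \emph{primal} (transversal) form of the $(p,q)$-theorem (``the $(p,q)$-property implies a bounded-size transversal''), whereas \cref{thm:pq} as stated in the paper is the \emph{dual} (covering) form (``if every small subset of $V$ lies in some member of $\cal F$, then boundedly many members cover $V$''). The paper applies \cref{thm:pq} literally to the set system $\cal G = \setof{E(A,b)}{b\in B}$ on $A$ and thereby proves the sign-complemented statement \cref{thm:pq-special} (a duality in the sense of \cref{def:duality}, with $E$ and $\neg E$ swapped relative to \cref{thm:vc-duality}); \cref{thm:vc-duality} then follows by replacing $E$ with $\neg E$, which preserves the VC-dimension bound. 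Your route proves \cref{thm:vc-duality} directly without the complementation step, which is cleaner for the statement as given, but you should source the primal form either from Matou\v{s}ek's paper~\cite{Matousek:2004:BVI:1005787.1005789} in its original formulation, or derive it from \cref{thm:pq} via the dual set system plus Assouad's lemma (as you hint, at only an inessential blow-up in the constants). With that citation fixed, your argument---including the explicit treatment of the degenerate $|B|<p_0$ case, which is harmless but not strictly necessary since the $(p,q)$-hypothesis is vacuously satisfied there---goes through.
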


To prove \cref{lem:super-coloring-intro}, we proceed as follows. 
The starting point is again \cref{lem:gaifman-coloring}.
Let $r$ and $t$ be given by that lemma. 
Fix a graph $G\in \CC$ and its coloring as in \cref{lem:gaifman-coloring}.
Assume, for the sake of simplicity,
that every color class $C$ is either \emph{large},
that is contains $3$ vertices with mutual distance larger than $2r$,
or is \emph{small}, that is, contains a \emph{central} vertex $c_0\in C$ such that 
every vertex $v\in C$ is within distance at most $2r$ from $c_0$.
This assumption is without much loss of generality, as 
every class that is neither large nor small can be partitioned into two new classes 
that are both small.
We construct the set $S$ as follows:
\begin{itemize}
  \item for every large color class $C$, pick three elements 
   which are mutually at distance larger than $2r$,
  and add them to $S$,
  \item for every small color class $C$, pick a central vertex $c_0\in C$, and add it to $S$,
  \item for every pair $A,B$ of color classes, let $S_{AB}\subset A\cup B$ be the result of applying \cref{thm:vc-duality} to the binary relation  $E_\phi\subset A\times B$ 
   where $E_\phi=\setof{(a,b)\in A\times B}{G\models \phi(a,b)}$. Add $S_{AB}$ to $S$.
\end{itemize}
This completes the construction of $S$. Note that $|S|\le \Oof(t\cdot~k^2)$, where $k$ is given by \cref{thm:vc-duality}.
Correctness of the construction is verified for the radius $5r$. This amounts to proving that there are no vertices $u,v,u',v'\in V(G)$ such that:
\begin{itemize}
  \item $\dist(u,v)>5r$ and $\dist(u',v')>5r$,
  \item $\phi(u,S)=\phi(u',S)$ and $\phi(S,v)=\phi(S,v')$,
  \item $\phi(u,v)$ holds and $\neg\phi(u',v')$ holds.
\end{itemize}
Assuming that such vertices exist, a contradiction is reached
with the assumption that $\phi(u,v)$ depends only on the color of $u$ and the color of $v$ whenever $\dist^G(u,v)>r$.
This is done by performing a case analysis, depending on the sizes (large/small) of the color class $C(u')$ of $u'$ and the color class $C(v)$ of $v$.

We showcase one of the four cases: when $C(u')$ and $C(v)$ are both small.
As $\dist(u,v)>5r$ and  $C(v)$ is small, it follows that 
$\dist(u,w)>r$ for all $w\in C(v)$. Since $\phi(u,v)$ holds, it follows 
that $\phi(u,w)$ holds for all $w\in C(v)$. In particular, for $S(v)=S\cap C(v)$ we have
$\phi(u,S(v))=S(v)$. As $\phi(u,S)=\phi(u',S)$ it follows that $\phi(u',S(v))=S(v)$ as well. By a symmetric argument, using the fact that $C(u')$ is small, we get that $\phi(S(u'),v) = \emptyset$.
This contradicts the construction of the set $S_{AB}\subset S$
for the pair $A=C(u')$ and $B=C(v)$.

The case when one of $C(u')$ and $C(v)$ is small and the other one is large uses similar arguments. 
The case when both classes are large is even more elementary, as it does not invoke the construction of the sets $S_{AB}$, and only relies on the existence of the three-element scattered sets in each of $C(u')$ and $C(v)$, that where selected to $S$.

This finishes the sketch of the proof of the main lemma, and hence also of the main theorem.
Note that in Section~\ref{sec:main-lemma} we state a slightly stronger version of Lemma~\ref{lem:super-coloring-intro}, which is suited for treating 
interpretations in which the domain formula $\delta(x)$ is arbitrary.
In Section~\ref{sec:main-proof}, 
we prove  Theorem~\ref{thm:mc_bounded_local_cliquewidth}.

\section{Defining the relationship between far apart vertices}
\label{sec:main-lemma}

In this section we prove our main technical tool, \cref{lem:super-coloring}.
First we need some notation. For a formula $\phi(x,y)$, elements $u,v$ and a set $S$ of elements of a structure $G$, write:
\begin{align*}
    \phi(u,S)&:=\setof{s\in S}{G\models \phi(u,s)}\\
    \phi(S,v)&:=\setof{s\in S}{G\models\phi(s,v)}.
\end{align*}

\begin{lemma}[Main lemma]\label{lem:super-coloring}
  Let $\CC$ be a class of graphs and 
  let $\phi(x,y)$ be an \FO formula that has bounded VC-dimension on $\CC$. 
  Then there are numbers $s,r\in \N$ 
  such that for every graph $G \in \CCC$ and every $U \subset V(G)$ there is a 
set $S\subset U$ of size at most $s$
such that for any two vertices $u,v\in U$ with $\dist^G(u,v)>r$,
  whether or not $\phi(u,v)$ holds depends only on $\phi(u,S)$ and $\phi(S,v)$, where
\emph{depends only on} means that  $G \models \phi(u,v)$ iff $G \models \phi(u',v')$
 for any two pairs $u,v$ and $u',v'$ from $U$ 
satisfying the following condition:
\[
\begin{array}{l@{\qquad}l@{\qquad}c}
    \phi(u,S) =    \phi(u',S) & \dist^G(u,v)>r,\\[-5pt]
    && (\ast)\\[-5pt]
    \phi(S,v)=    \phi(S,v')&\dist^G(u',v')>r.
\end{array}
\]

\end{lemma}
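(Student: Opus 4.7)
The plan is to refine Gaifman's locality theorem quantitatively by leveraging the duality for bounded-VC-dimension relations (\Cref{thm:vc-duality}). First I apply \Cref{lem:gaifman-coloring} to $\phi$ to obtain a radius $r_0$ and a number $t$ such that every $G \in \CCC$ admits a $t$-coloring of $V(G)$ with the property that $\phi(u,v)$, for $\dist^G(u,v) > r_0$, depends only on the colors of $u$ and $v$. Set $r := 5r_0$. After restricting the coloring to $U$ and, if necessary, further splitting color classes that are neither large nor small, I may assume that each color class $C$ is either \emph{large}, containing three vertices of $U$ at pairwise $G$-distance greater than $2r_0$, or \emph{small}, with $C \cap U$ contained in the $2r_0$-ball around some $c_0 \in U$. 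Such a refinement at most doubles the number of colors.

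Next I would construct $S$ as the union of three ingredients: for each large class, three scattered representatives witnessing largeness; for each small class, the chosen center $c_0$; and, for every ordered pair $(A,B)$ of color classes, a set $S_{AB} \subset (A \cap U) \cup (B \cap U)$ of size at most $k$ obtained by applying \Cref{thm:vc-duality} to the relation $E_\phi := \{(a,b) \in (A \cap U) \times (B \cap U) : G \models \phi(a,b)\}$. Since $\phi$ has bounded VC-dimension on $\CCC$, so does every restriction of $E_\phi$, and the theorem supplies a uniform constant $k$. The total size of $S$ is at most $\Oof(k t^2)$, which will serve as $s$.

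For correctness, suppose toward a contradiction that there exist $u, v, u', v' \in U$ satisfying $(\ast)$ but with $\phi(u,v)$ holding and $\phi(u',v')$ failing. Writing $A := C(u')$ and $B := C(v)$ for the corresponding color classes, I would perform a case distinction on whether each of $A, B$ is large or small. In the central \emph{both-small} case, smallness of $B$ combined with $\dist^G(u,v) > 5r_0$ forces $\dist^G(u,w) > r_0$ for every $w \in B \cap U$, so Gaifman together with $\phi(u,v)$ yields that $\phi(u,s)$ holds for all $s \in S \cap B$; the hypothesis $\phi(u,S) = \phi(u',S)$ then transports this to $\phi(u', S \cap B) = S \cap B$. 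Symmetrically, smallness of $A$ and $\neg \phi(u',v')$ give $\phi(S \cap A, v) = \emptyset$. These two conclusions now contradict whichever alternative of \Cref{thm:vc-duality} produced $S_{AB}$: the covering property of $S_{AB}$ applied to $v$ (in the case $S_{AB} \subset A$) or to $u'$ (in the case $S_{AB} \subset B$) directly violates one of them.

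In the remaining cases, where at least one of $A, B$ is large, the same scheme applies but the center-based arguments are replaced by a pigeonhole on the three scattered representatives of the large class: because they are pairwise at distance greater than $2r_0$, at most one lies within $r_0$ of any fixed vertex, so there is always a representative at $G$-distance greater than $r_0$ from both $v$ and $v'$ (respectively from both $u$ and $u'$), and Gaifman lets me read off the color-pair value of $\phi$ at such a representative and transfer it across the two equalities in $(\ast)$. The fully large case is the cleanest, invoking no $S_{AB}$. The main obstacle I anticipate is the mixed large/small case, where the smallness-based conclusion ``$\phi(u', S \cap B) = S \cap B$'' must be combined with the pigeonhole-on-representatives argument and then closed via the appropriate alternative of \Cref{thm:vc-duality}; correctly routing these two pieces of information through the relevant $S_{AB}$ is exactly where the full strength of the VC-duality theorem is needed.
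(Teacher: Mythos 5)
Your construction of $S$ and the overall skeleton (Gaifman coloring, splitting classes into large/small, scattered triples for large classes, centers for small classes, a duality set $S_{AB}$ per ordered pair of classes, radius $5r_0$) matches the paper's proof exactly, modulo the fact that the paper abstracts the core argument into a stand-alone combinatorial theorem (\cref{thm:independence-nip}) about a set with a binary relation, a pseudometric, and a partition, whereas you argue directly on $H = \interp I_\phi(G)[U]$; this is only an organizational difference. Your both-small case and both-large case are handled the same way the paper does.

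The genuine gap is in the mixed large/small case, and your sketch of how it should close is misleading. You write that the two pieces of information (the smallness-derived equality $\phi(u', S\cap B) = S\cap B$ and the pigeonhole-derived genericity) must be ``closed via the appropriate alternative of \cref{thm:vc-duality},'' suggesting a direct duality violation as in the both-small case. That does not work: with only one small class, only one of the two conclusions of \cref{cl:star} is available, and the alternative of $S_{AB}$ produces no immediate contradiction -- you cannot rule out the possibility that the troublesome vertex provided by the duality is merely \emph{close} to $u$ or $v$. What actually closes the case in the paper is a distance-chaining argument: first, the center of the small class together with the appropriate genericity from \cref{cl:large} bounds $\dist(u',v)\le 3r_0$ (\cref{cl:star'}); second, the duality set $S_{C(v)C(u')}$ supplies a vertex $s\in S(v)$ with $\neg E(u',s)$, which by $\phi(u,S)=\phi(u',S)$ and genericity forces both $\dist(u',s)\le r_0$ and $\dist(u,s)\le r_0$, hence $\dist(u,u')\le 2r_0$; combining gives $\dist(u,v)\le 5r_0$, directly contradicting the hypothesis $\dist^G(u,v)>5r_0$. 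So the radius $5r_0$ is not a safety margin one sets up front -- it is precisely calibrated by this chain -- and the contradiction is a metric one, not a violation of the shattering/covering alternatives. You include the centers in $S$ but never invoke them in your argument sketch; that is the signal that a step is missing. Working out this chain (and noting that the appeal to $S_{AB}$ implicitly uses the symmetry of $\phi$, which the paper assumes in context) is what you would need to complete the proof.
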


The following property will play a key role in the proof of the main results of this section.

\begin{definition}\label{def:duality}
  Let $E\subset A\times B$ be a binary relation.
  Say that $E$ has a \emph{duality of order $k$} if at least one of two cases holds:
  \begin{itemize}
    \item[a)] there is a set $A'\subset A$ of size at most $k$ such that for every $b\in B$ there is some $a\in A'$ with $\neg E(a,b)$, or 
    \item[b)] there is a set $B'\subset B$ of size at most $k$ such that for every $a\in A$ there is some $b\in B'$ with $E(a,b)$.
  \end{itemize}
\end{definition}

A \emph{set system} $\cal F$ on a set $X$ is a family $\cal F$ of subsets of $X$. The \emph{VC-dimension} of $\cal F$ is the maximal size (or $+\infty$) of a subset $A\subset X$ such that $\setof{F\cap A}{F\in \cal F}=\mathcal{P}(A)$.
For $m\in \N$ let $\pi_{\cal F}(m)$  denote the \emph{shatter function} of $\cal F$, defined as 
\[
\pi_{\cal F}(m) :=  \max \Big\{ \big|\{F\cap A  \sth F\in \cal F\}\big| \sth A \subset X, |A| \le m \Big\},
\]
i.e.,~the  
maximum, over all sets $A\subset X$ with $|A|\le m$, of the cardinality of
$\setof{F\cap A}{F\in \cal F}$.
It is well known that if $\cal F$ has VC-dimension $d$ then $\pi_{\cal F}(m)=\Oof(m^d)$.

Define the VC-dimension of a binary relation $E\subset X\times Y$ as the VC-dimension of the set system 
$\setof{E(X,y)}{y\in Y}$~on~$Y$.

The following is a special case of the \emph{$(p,q)$-theorem}, stated below.
\begin{theorem}\label{thm:pq-special}
  For every $d\in \N$ there is some $k\in\N$ such that the following holds.
  Let $E\subset A\times B$ have VC-dimension at most~$d$, where $A$ and $B$ are finite. Then $E$ has a duality of~order~$k$.
\end{theorem}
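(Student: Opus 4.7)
The plan is to derive this theorem as a direct consequence of the general combinatorial $(p,q)$-theorem (\cref{thm:pq}) applied to the complement of the column set system of $E$. Concretely, let $\mathcal{F} = \{E(A,b) : b \in B\}$ denote the column family on $A$; by hypothesis it has VC-dimension at most $d$. Since complementation preserves shattering (a subset $A_0 \subset A$ is shattered by a family iff it is shattered by the complement family), the complement family $\bar{\mathcal{F}} = \{A \setminus E(A,b) : b \in B\}$ also has VC-dimension at most $d$, so \cref{thm:pq} applies to it.

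I would then fix $q := d+1$ and any $p \ge q$ (the exact choice is inconsequential; $p = q$ is the simplest option) and let $C = C(p,q,d)$ be the transversal bound that \cref{thm:pq} attaches to this data. Setting $k := \max(C, q)$, the argument splits on whether $\bar{\mathcal{F}}$ satisfies the $(p,q)$-property. If it does, then \cref{thm:pq} supplies a transversal $A' \subset A$ of $\bar{\mathcal{F}}$ of size at most $C \le k$; unfolding the definition of transversal, $A'$ hits every complement $A \setminus E(A,b)$, which is exactly the statement that for each $b \in B$ some $a \in A'$ satisfies $\neg E(a,b)$, i.e.\ case (a) of \cref{def:duality} holds.

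If instead the $(p,q)$-property fails, then there exist $b_1, \ldots, b_p \in B$ such that no $q$-subset of the complements $A \setminus E(A, b_i)$ has a common element. By De Morgan, this says that every $q$-subset of the columns $\{E(A, b_1), \ldots, E(A, b_p)\}$ covers $A$; picking any one such $q$-subset yields $B' \subset B$ of size $q \le k$ for which every $a \in A$ admits some $b \in B'$ with $E(a,b)$, which is case (b) of \cref{def:duality}.

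The only genuine difficulty is packaged inside \cref{thm:pq}, whose proof (via the Alon--Kleitman fractional Helly method, specialised to bounded VC-dimension à la Matoušek) is substantial; the derivation above is then pure bookkeeping. The subtleties worth flagging are only that taking complements does not alter VC-dimension and that the failure of the $(p,q)$-property unpacks to a covering statement; both are immediate.
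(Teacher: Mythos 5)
Your proposal uses \cref{thm:pq} as if it were the standard transversal-form $(p,q)$-theorem (hypothesis: $(p,q)$-property, i.e.\ among any $p$ members some $q$ intersect; conclusion: a set of points transversing the family), but that is not what \cref{thm:pq} actually says. As stated in the paper, the hypothesis of \cref{thm:pq} is the \emph{containment} condition (every $p$-element subset of $V$ is contained in some member of $\cal F$) and the conclusion is a \emph{covering} (at most $N$ members of $\cal F$ whose union contains $V$). Those are the dual objects: the $(p,q)$-property and a transversal are about hitting sets with points, while \cref{thm:pq} is about covering points with sets. So in your Case~1 the hypothesis you check is not the one \cref{thm:pq} asks for, and the transversal of $\bar{\mathcal F}$ you want to extract is not what \cref{thm:pq} delivers. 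Furthermore, even if you were to invoke Matoušek's original transversal-form theorem directly, the shatter function it constrains is the \emph{dual} shatter function of $\bar{\mathcal F}$, whose exponent is controlled only through VC-duality (and can be of order $2^{d+1}$), so the choice $q := d+1$ is not justified by the primal VC-dimension bound you have.

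The paper's proof stays entirely on the covering side and avoids both issues: it applies \cref{thm:pq} with $p = d+1$ to the column system $\cal G = \{E(A,b) : b \in B\}$ on $A$, taking $V = A$. If the containment hypothesis fails, the witnessing set $A' \subset A$ of size at most $p$ is, unpacked, already a set satisfying case~(a) of \cref{def:duality}; if the hypothesis holds, \cref{thm:pq} yields a covering of $A$ by $N$ columns, which is case~(b). No complementation, no passage to transversals, and the exponent $d+1$ is exactly what the primal VC-dimension bound gives. Your mirror-image split (Case~(a) from a transversal of $\bar{\mathcal F}$; Case~(b) directly from the failure of the $(p,q)$-property) is a reasonable idea and would go through with the correct transversal-form citation and a larger $q$, but it is not a derivation from \cref{thm:pq} as written.
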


This result follows from
the proof of the conjecture of Hadwiger and Debrunner, see 
Matou{\v s}ek~\cite[Theorem 4]{Matousek:2004:BVI:1005787.1005789}.
In the following formulation,
which is dual to the formulation 
of Matou{\v s}ek, the set system $\cal F$ is infinite.

\begin{theorem}[\cite{Matousek:2004:BVI:1005787.1005789}]
  \label{thm:pq}
	Let $\cal F$ be a set system on $U$ with $\pi_{\cal F}(m)=o(m^k)$,
	for some integer $k$, and let $p\ge k$.
	Then there is a constant $N$ such that the following holds for every finite set $V\subset U$: 
	if for every $V'\subset V$ with  $|V'|\le p$ there is some $F\in \cal F$ containing $V'$, then there is a family $\cal F'\subset \cal F$ with $|\cal F'|\le N$ and $V\subset \bigcup \cal F'$.
\end{theorem}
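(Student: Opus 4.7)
The plan is to follow the Alon--Kleitman strategy, as adapted by Matoušek to set systems of subpolynomial shatter function. Three ingredients will be combined: a dualization, a fractional Helly theorem, and a rounding argument using LP duality and weak $\epsilon$-nets.

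First I will dualize. Define the set system $\mathcal{A} := \{A_v : v \in V\}$ on the ground set $\mathcal{F}$, where $A_v := \{F \in \mathcal{F} : v \in F\}$. The hypothesis that every $V' \subseteq V$ of size at most $p$ lies in some $F \in \mathcal{F}$ translates directly to the statement that every $p$ sets among the $A_v$ have a common point (namely, such an $F$). The target conclusion --- covering $V$ by at most $N$ sets from $\mathcal{F}$ --- becomes the assertion that $\mathcal{A}$ has piercing number at most $N$. A separate, routine bookkeeping step transfers the bound $\pi_\mathcal{F}(m) = o(m^k)$ to the shatter function of $\mathcal{A}$ on ground set $\mathcal{F}$, so that the VC-dimension of $\mathcal{A}$ is also bounded by a constant depending only on $k$.

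The crux is a \emph{fractional Helly} theorem of the following form: there exists $\beta = \beta(p,k) > 0$ such that for every finite subfamily $\mathcal{A}_0 \subseteq \mathcal{A}$ in which every $p$ sets have a common point, some single point of $\mathcal{F}$ lies in at least $\beta \cdot |\mathcal{A}_0|$ sets of $\mathcal{A}_0$. This is the step where the hypothesis $\pi_\mathcal{F}(m) = o(m^k)$ is used crucially: a double-counting argument, together with a Ramsey-type extraction, shows that if no point of $\mathcal{F}$ belonged to a $\beta$-fraction of $\mathcal{A}_0$, then one could build a shattered configuration whose size grows faster than $m^k$, contradicting the shatter-function hypothesis once $\beta$ is chosen small enough. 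I expect this fractional Helly step to be the main obstacle, because converting a ``no common-point fraction'' assumption into a shattered configuration requires a delicate ``colourful'' extraction that has to exploit the exact polynomial threshold $k$.

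With fractional Helly established, the argument concludes by LP duality and a weak $\epsilon$-net. Fractional Helly implies that the fractional piercing number $\tau^*(\mathcal{A})$ is at most $1/\beta$, by LP duality applied to the point-set incidence of $\mathcal{A}$. Since $\mathcal{A}$ has bounded VC-dimension, the weak $\epsilon$-net theorem of Haussler--Welzl promotes a fractional transversal of size $\tau^*$ to an integer transversal of size $N = O(\tau^* \log \tau^*)$, a constant depending only on $p$ and $k$. Translating back through the dualization of the first step yields a subfamily $\mathcal{F}' \subseteq \mathcal{F}$ of size at most $N$ whose union covers $V$, as required.
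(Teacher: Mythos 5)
The paper does not prove this theorem: it is taken directly from Matoušek~\cite{Matousek:2004:BVI:1005787.1005789} (his Theorem~4, in dual form), so there is no in-paper argument to compare against. Your sketch reproduces exactly the Alon--Kleitman/Matoušek pipeline used in that reference---dualize, prove a fractional Helly theorem, then bootstrap a fractional transversal to an integral one via LP duality and a weak $\epsilon$-net---which is the right route, and your identification of the fractional Helly step as the crux is correct.

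One imprecision is worth flagging. You say the bookkeeping step transfers $\pi_{\mathcal F}(m)=o(m^k)$ ``to the shatter function of $\mathcal A$,'' then immediately fall back to saying the VC-dimension of $\mathcal A$ is bounded. Bounded VC-dimension alone is not enough: Matoušek's fractional Helly theorem is calibrated to the \emph{dual} shatter function of the family in question, and the parameter $p \ge k$ in the conclusion is tied to the exponent $k$ in that dual shatter-function bound rather than to the VC-dimension. What you actually need, and what the dualization delivers, is $\pi^*_{\mathcal A}(m) \le \pi_{\mathcal F}(m) = o(m^k)$: a cell of the Venn diagram of $m$ sets $A_{v_1},\dots,A_{v_m}$ is determined by $\{i : v_i \in F\}$, so the number of nonempty cells is at most $\pi_{\mathcal F}(m)$. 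Once this is stated, your later use of ``the hypothesis $\pi_{\mathcal F}(m)=o(m^k)$'' inside the fractional Helly argument is the correct quantity; just be aware that it is playing the role of the dual shatter function of $\mathcal A$, not a primal shatter function or a VC-dimension bound, and that weakening it to ``bounded VC-dimension'' would lose the tight dependence on $p$ that the theorem asserts.
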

\begin{proof}[Proof of \cref{thm:pq-special}]
  Let $\cal F$ be the disjoint union of all finite set systems of VC-dimension at most $d$.
  Then $\cal F$ has VC-dimension at most $d$ as well, and therefore 
   $\pi_{\cal F}(m)=\Oof(m^d)=o(m^{d+1})$.
   Apply \cref{thm:pq} to $p=d+1$, obtaining a number $N$ with the following property: for every set system $\cal G$ on a finite set $V$ of VC-dimension at most $d$, such that every $p$ elements of $V$ are contained in some element of $\cal G$, there is a set of at most $N$ elements of $\cal G$ whose union contains $V$.

Let $E\subset A\times B$ have VC-dimension at most $d$, and let $\cal G=\setof{E(A,b)}{b\in B}$ be the corresponding set system~on~$A$.

Suppose there is a set $A'\subset A$ of size at most $p$ such that for every $b\in B$ there is some $a\in A'$ with $\neg E(a,b)$. Then $E$ has a duality of order $p=d+1$.

Otherwise, for every $A'\subset A$ of size at most $p$ there is some $b\in B$ such that $E(a,b)$ holds for all $a\in A'$.
This means that every subset of $A$ of size at most $p$ is contained in some element of $\cal G$. Hence, there is a 
subset $\cal G'\subset \cal G$ with $|\cal G'|\le N$ such that $A=\bigcup\cal G'$. This means that there is a
set $B'\subset B$ with $|B'|\le N$
such that for every $a\in A$,
$E(a,b)$ holds for some $b\in B'$.
Then $E$ has a duality of order $N$.

In either case, $E$ has a duality of order $\max(d+1,N)$.
  \end{proof}

Fix a partition $\cal P$ of 
a set $V$. For an element $v\in V$,
the \emph{class} of $v$, denoted $C(v)$,
is the unique $C\in \cal P$ containing~$v$.
In the context of the next theorem,
a \emph{pseudometric} is a symmetric function $f \from V \times V \to \R^+ \cup \{+\infty\}$ satisfying the triangle inequality.

\begin{theorem}\label{thm:independence-nip}
  Fix  $r,k,t\in\N$.
  Let $V$ be a finite set equipped with:
\begin{itemize}
  \item a binary relation $E\subset V\times V$
  such that for all $A\subset V$ and $B\subset V$, $E\cap (A\times B)$ has a duality of order $k$,
  \item a pseudometric $\dist\from V\times V\to \R_{\ge 0}\cup\set{+\infty}$,
  \item a partition $\cal P$ of $V$ with $|\cal P|\le t$,
\end{itemize} 
such that $E(u,v)$ depends only on $C(u)$ and $C(v)$ for all $u,v$ with $\dist(u,v)>r$.
Then there is a set $S\subset V$ of size $\Oof(k t^2)$ 
such that $E(u,v)$ depends only on $E(u,S)$ and $E(S,v)$ for all $u,v\in V$ with $\dist(u,v)>5r$.
\end{theorem}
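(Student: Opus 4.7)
The plan is to generalize the argument sketched for the main lemma in \Cref{sec:outline} to this abstract setting, where the roles of Gaifman's coloring and graph distance are played by the given partition $\mathcal{P}$ and the pseudometric $\dist$, and the VC-dimension assumption is replaced by the duality hypothesis. I would proceed in three phases.

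\emph{Refining the partition.} Call a class $C \in \mathcal{P}$ \emph{large} if it contains three vertices pairwise at $\dist$-distance greater than $2r$, and \emph{small} otherwise. A small class $C$ admits a \emph{central} vertex $c_C \in C$ with $\dist(c_C, v) \le 2r$ for every $v \in C$; if no such vertex existed, then two vertices $c_1, c_2 \in C$ with $\dist(c_1, c_2) > 2r$ would witness a split of $C$ into the two $2r$-balls around them, each again small (else a third well-separated vertex would contradict $C$ being non-large). Iterating this splitting at most doubles $|\mathcal{P}|$ and preserves the hypothesis that $E(u,v)$ depends only on $C(u), C(v)$ whenever $\dist(u,v) > r$. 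So I may assume every class is large or small, with $|\mathcal{P}| \le 2t$.

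\emph{Constructing $S$.} Add to $S$ three pairwise $>2r$-separated vertices from each large class, the central vertex $c_C$ from each small class, and the duality witness $S_{AB} \subset A \cup B$ of size at most $k$ (supplied by the hypothesis) for each ordered pair $(A,B)$ of classes. This yields $|S| \le 3|\mathcal{P}| + |\mathcal{P}| + k|\mathcal{P}|^2 = \Oof(k t^2)$.

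\emph{Verifying correctness.} Suppose for contradiction that there exist $u, u', v, v' \in V$ with $\dist(u,v), \dist(u',v') > 5r$, $E(u,S) = E(u',S)$, $E(S,v) = E(S,v')$, $E(u,v)=1$ and $E(u',v')=0$. Write $f(C,D)$ for the value of $E$ forced on any pair $(a,b) \in C \times D$ with $\dist(a,b) > r$, whenever such a pair exists. Immediately, $f(C(u), C(v)) = 1$ and $f(C(u'), C(v')) = 0$. I would now perform a case analysis on whether each of $C(u')$ and $C(v)$ is large or small.

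When both $C(u')$ and $C(v)$ are large, pigeonhole on the three scattered witnesses in $S \cap C(v)$ produces some $z$ at $\dist$-distance $>r$ from both $u$ and $u'$; combined with $E(u', z) = E(u, z) = 1$ this forces $f(C(u'), C(v)) = 1$. The symmetric argument on the scattered witnesses in $S \cap C(u')$, using $E(u',v') = 0$, forces $f(C(u'), C(v)) = 0$, contradicting. When $C(v)$ is small, the entire class lies within $4r$ of $v$, hence at distance $>r$ from $u$, so $E(u,w) = 1$ for all $w \in C(v)$; propagating via $E(u,S) = E(u',S)$ gives $E(u',s)=1$ for all $s \in S \cap C(v)$. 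Symmetrically, if $C(u')$ is small, then $E(s,v) = 0$ for all $s \in S \cap C(u')$. The duality witness $S_{C(u'), C(v)}$—exhibiting either a small set $A' \subset C(u')$ hitting every column of $C(v)$ with a non-$E$-edge, or a small set $B' \subset C(v)$ hitting every row of $C(u')$ with an $E$-edge—then, together with these propagated equalities, produces a pair in $C(u') \times C(v)$ at $\dist$-distance $>r$ on which $f(C(u'), C(v))$ is forced to take both values $0$ and $1$.

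The main obstacle will be the both-small case, where no distance-based pigeonhole on scattered witnesses is available and the contradiction must be extracted entirely from the duality witness $S_{C(u'), C(v)}$ combined with the propagation of $E$-values across $S$. The radius $5r$ is calibrated precisely so that, whichever of the two sides of duality is realized, one can locate the required pair at $\dist$-distance $>r$; the careful bookkeeping of these distances—and the corresponding check in the mixed small/large cases, which blend the pigeonhole and duality arguments—will be the technical heart of the argument.
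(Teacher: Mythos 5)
Your overall blueprint matches the paper's proof: refine $\mathcal{P}$ into large and small classes (with the same splitting trick), build $S$ from scattered triples, centers, and duality witnesses for every ordered pair of classes, then case-analyze on whether $C(u')$ and $C(v)$ are large or small. The both-large case and the two propagation claims (that $C(v)$ small forces $E(u',s)$ for every $s\in S\cap C(v)$, and the dual statement for $C(u')$ small) are correct and match the paper. However, your description of how the remaining two cases close is off, and the hardest one is left undone. In the both-small case there is no need to ``locate a pair at $\dist$-distance $>r$'': once you have $E(u',s)$ for all $s\in S\cap C(v)$ and $\neg E(s,v)$ for all $s\in S\cap C(u')$, plugging $u'$ and $v$ directly into the two alternatives of the duality set $S_{C(v)C(u')}$ gives an immediate combinatorial contradiction (the duality either produces $a\in S\cap C(v)$ with $\neg E(a,u')$, or $b\in S\cap C(u')$ with $E(v,b)$, each flatly contradicting a propagated fact, using symmetry of $E$). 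Note the orientation: the duality you need is the one for $E\cap(C(v)\times C(u'))$, not $E\cap(C(u')\times C(v))$ as written in your sketch — that is exactly why the construction includes $S_{CD}$ for every \emph{ordered} pair.

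The genuinely hard case, which your sketch leaves entirely open, is the mixed one — say $C(u')$ small and $C(v)$ large (the other mixed case reduces to it by symmetry) — and it ends not in a direct combinatorial clash but in a chain of distance inequalities contradicting $\dist(u,v)>5r$. Since $C(v)$ is large, the scattered-triple pigeonhole shows that $E$ generically holds between $C(u')$ and $C(v)$; the propagation gives $\neg E(c_0,v)$ for the center $c_0$ of $C(u')$, so genericness forces $\dist(c_0,v)\le r$, hence $\dist(u',v)\le 3r$. The duality $S_{C(v)C(u')}$ then produces $s\in S\cap C(v)$ with $\neg E(u',s)$; propagating through $E(u,S)=E(u',S)$ gives $\neg E(u,s)$ as well, and genericness (between $C(u')$ and $C(v)$, and between $C(u)$ and $C(v)$) forces $\dist(u',s)\le r$ and $\dist(u,s)\le r$, so $\dist(u,u')\le 2r$ and finally $\dist(u,v)\le \dist(u,u')+\dist(u',v) \le 5r$. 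This chain — not the both-small case — is where the constant $5r$ is actually spent. You correctly anticipate that the mixed case ``blends the pigeonhole and duality arguments,'' but you neither identify the $\dist(u',v)\le 3r$ step via the center nor the two $\le r$ bounds from genericness, so the verification is incomplete.
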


\begin{proof}
  Say that a class $C\in\cal P$ is \emph{large}
  if there are $s_1,s_2,s_3\in C$ with 
  mutual distance larger than $2r$. 
  Say that a class $C\in\cal P$ is \emph{small}
  if there is $c_0\in C$ such that $\dist(c,c_0)\le 2r$ for all $c\in C$.
  If a class $C\in\cal P$ is not large then 
  there are $s_1,s_2\in C$ such that 
  $\dist(c,s_1)\le 2r$ or $\dist(c,s_2)\le 2r$ for all $c\in C$.
  For every class $C\in\cal P$ that is neither large nor small, pick arbitrarily any  such $s_1$ and $s_2$ 
  and let $C_1=\setof{c\in C}{c\neq s_2, \dist(c,s_1)\le 2r}$ and $C_2:=C\setminus C_1$.
  Thus, by splitting the class $C\in\cal P$ 
  into two  classes $C_1$ and $C_2$, we arrive at the situation where both $C_1$ and $C_2$ are small. 
  Hence, by at most doubling the number $t$ of classes,  
  we may assume that every class $C\in\cal P$ is either large or small.

\paragraph*{Construction of $\boldsymbol{\,S}$.}
We now construct the set $S$.
For every ordered pair $(C,D)\in \cal P^2$ of classes let $S_{CD}\subset C\cup D$ be a duality of order $k$ for $E\cap (C\times D)$, that is, $|S_{CD}|\le k$ and one of two cases holds:
\begin{itemize}
  \item for every $c\in C$ there is some $d\in S_{CD} \subseteq D$ with~$E(c,d)$, or 
  \item for every $d\in D$ there is some $c\in S_{CD} \subseteq C$ with~$\neg E(c,d)$.
\end{itemize}
Such a set $S_{CD}$ exists by the duality assumption of the lemma. Note that $S_{CD}$ and $S_{DC}$ are usually not the same and that we allow $C=D$ in the definition of $S_{CD}$.
Let $S\subset V$ be the set containing the following elements:
\begin{itemize}
  \item 
 for every class $C\in \cal P$ that is large,
any three elements $s_1,s_2,s_3\in C$ with mutual distance larger than $2r$,
  \item a center $c_0$ of every small class $C$, so that $\dist(w,c_0)\le 2r$ for all $w\in C$,
  \item all elements of $S_{CD}$, for every pair $(C,D)\in\cal P^2$.
\end{itemize}
Clearly, $S$ has $\Oof(kt^2)$ elements.

\paragraph*{Correctness.}
We show that $S$ satisfies the condition in the lemma.
Write $S(w)$ for $S\cap C(w)$, for $w\in V$. 
Towards a contradiction, suppose $u,v,u',v'\in V$ are such that:
\begin{enumerate}
  \item $E(u,v)$ and $\neg E(u',v')$,
  \item $\dist(u,v)>5r$ and $\dist(u',v')>5r$,
  \item $E(u,S)=E(u',S)$,
  \item $E(S,v)=E(S,v')$.
\end{enumerate}
We show that this yields a contradiction.

For a pair of classes $C,D\in \cal P$, possibly with $C=D$,
say that $E$ \emph{generically holds}
between $C$ and $D$ if $E(c,d)$ holds for some $c\in C$ and $d\in D$ 
such that $\dist(c,d)>r$.
Similarly define when $\neg E$ generically holds between $C$ and $D$.
Note that if $E$ generically holds between $C$ and $D$ then 
$E(c,d)$ holds for all $c\in C$ and $d\in D$ such that $\dist(c,d)>r$, by 
the assumption of the theorem. The same applies to $\neg E$.

By assumption, $E$ generically holds between $C(u)$ and $C(v)$,
whereas $\neg E$ generically holds between $C(u')$ and $C(v')$.

\begin{claim}\label{cl:large}
  The following hold:
  \begin{enumerate}
    \item If $C(u')$ is large, then
    $\neg E$ generically holds between $C(u')$ and $C(v)$.
    \item If $C(v)$ is large, then
    $E$ generically holds between $C(u')$ and $C(v)$.
  \end{enumerate}
\end{claim}
\begin{proof}
  We prove the first item, as the other one follows by symmetry. 
  The following situation is depicted in \Cref{fig:nip_claim1}.

Suppose $C(u')$ is large.
Then there are three elements in $S(u')$ with mutual distance larger than $2r$. At most one of them can be at distance at most $r$ from $v'$. So
we have $s_1,s_2\in S(u')$
with $\dist(s_i,v')>r$ for $i=1,2$. Then $\neg E(s_1,v')$ and 
$\neg E(s_2,v')$ holds since $\neg E$ generically holds between $C(u')$ and $C(v')$.
Since $s_1, s_2\in S$
and $E(S,v')=E(S,v)$, it follows that $\neg E(s_1, v)$ and 
$\neg E(s_2, v)$ hold as well.
As above, $v$ can be at distance at most $r$ only from one of $s_1$ and $s_2$.
It follows that $\neg E$ generically holds between $C(u')$ and $C(v)$.
\end{proof}

\begin{figure}[ht]
  \begin{center}
  \includegraphics[scale=0.75]{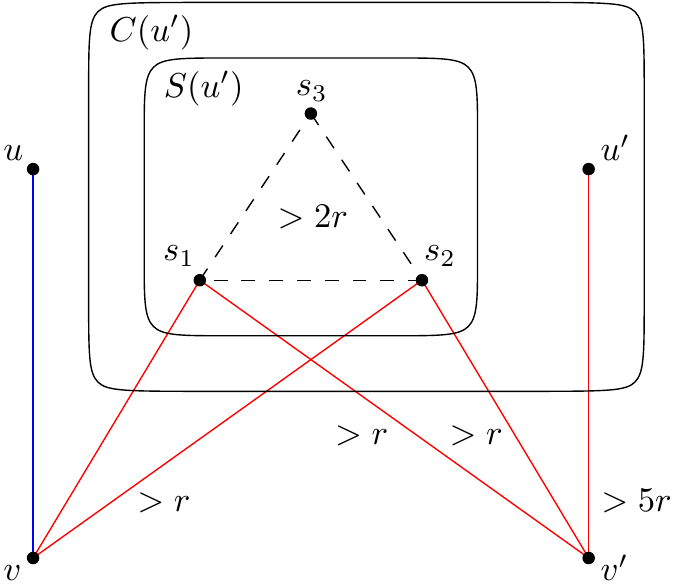}
  \end{center}
  \caption{
    A visualization of Claim \ref{cl:large}.
  Edges are annotated with the distances given by the pseudometric $\dist$.
  A \textcolor{blue}{blue} edge denotes an $E$ connection. 
  A \textcolor{red}{red} edge denotes a $\neg E$ connection.
  A \textcolor{gray}{dashed} edge is used when only the distance is of relevance.
  }
  \label{fig:nip_claim1}
\end{figure}

Consequently, $C(u')$ and $C(v)$ cannot both be large as 
it cannot be the case that simultaneously $E$ and $\neg E$ generically hold between them.


\medskip
We now show that we also arrive at a contradiction if both $C(u')$ and $C(v)$ are small. Later we will consider the case when one of them is small and the other one is large.

\begin{claim} \label{cl:star}The following hold:
  \begin{enumerate}
    \item If $C(u')$ is small, then 
    $\neg E(s,v)$ holds for all $s\in S(u')$.
    \item If $C(v)$ is small, then 
    $E(u',s)$ holds for all $s\in S(v)$.
  \end{enumerate}
\end{claim}
\begin{proof}Again we prove the first item, as 
  the other one follows by symmetry.
  The following situation is depicted in \Cref{fig:nip_claim2}.

  Fix $s\in S(u')$.
  Observe that $\dist(v',s)>r$.
Indeed, suppose $\dist(v',s)\le r$.
As $C(u')$ is small, $\dist(s,u')\le 4r$. Together this gives $\dist(v',u')\le 5r$, a contradiction. 

As $\neg E$ generically holds between $C(u')$ and $C(v')$, it follows that $\neg E(s,v')$ holds.
Since $E(S,v)=E(S,v')$, we get that $\neg E(s,v)$ holds.
\end{proof}

\begin{figure}[ht]
  \begin{center}
  \includegraphics[scale=0.75]{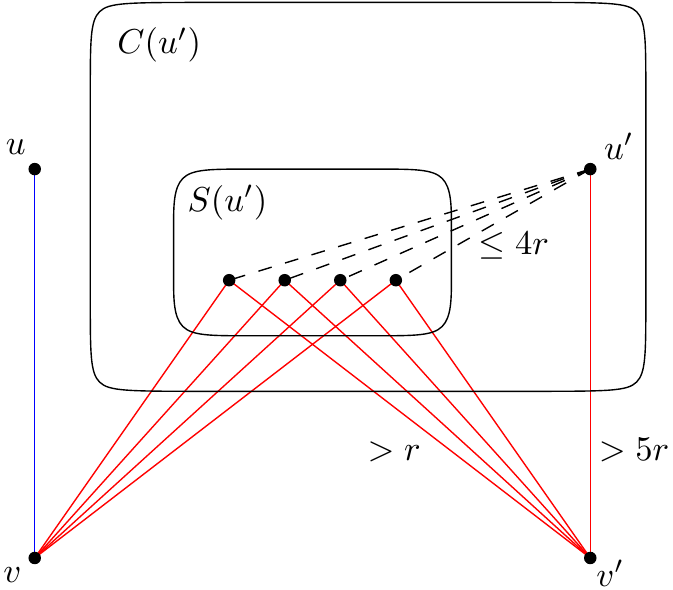}
  \end{center}
  \caption{A visualization of Claim \ref{cl:star}. The same notation as in Figure \ref{fig:nip_claim1} is used.}
  \label{fig:nip_claim2}
\end{figure}

Suppose both $C(u')$ and $C(v)$ are small. Then $E(u',s)$ holds for all
$s\in S(v)$,
and $\neg E(s,v)$ holds for all $s\in S(u')$,   contradicting the construction of $S_{C(v)C(u')}$,
as we have that either:
\begin{itemize}
  \item for every $b \in C(u')$ there is some $a \in S(v)$ with $\neg E(a,b)$, a contradiction to $u'$ being $E$-connected to every vertex from $S(v)$, or
  \item for every $a \in C(v)$ there is some $b \in S(u')$ with $E(a,b)$, a contradiction to $v$ being $E$-connected to no vertex from $S(u')$.
\end{itemize}

\medskip
So we are left with the case 
when exactly one of $C(u')$ and $C(v)$ is small.
By symmetry, we may assume that 
$C(u')$ is small: otherwise $C(v)$ is small and, up to  replacing $E(x,y)$ with $\neg E(y,x)$ and $u,v,u',v'$ with $v',u',v,u$, we are in the same case.

So $C(u')$ is small and $C(v)$ is large.
Then by \cref{cl:large},
 $E$ generically holds between $C(u')$ and $C(v)$.
 And by \cref{cl:star},
  $\neg E(s,v)$ holds for all $s\in S(u')$.

\begin{claim}\label{cl:star'}
  We have $\dist(u',v)\le 3r$.
\end{claim}
\begin{proof}
  Since $\neg E(s,v)$ holds for all $s\in S(u')$, in particular for the selected center $c_0\in S(u')$ of the small class $C(u')$ we have that $\neg E(c_0,v)$ holds.
   Since 
 $E$ generically holds between $C(u')$ and $C(v)$,
    it must be the case that ${\dist(c_0,v)\le r}$.
   Together with  $\dist(u',c_0)\le 2r$ this yields ${\dist(u',v)\le 3r}$.
\end{proof}


Since $\neg E(s,v)$ holds for all $s\in S(u')$,
 by construction of $S_{C(v)C(u')}$ 
 there is some $s\in S(v)$ such that 
 $\neg E(u',s)$ holds. 
  Then also $\neg E(u,s)$ holds, as $E(u,S)=E(u',S)$.
  As $E$ generically holds between $C(u')$ and $C(v)$, it follows that $\dist(u',s) \le r$.
   For a similar reason, $\dist(u,s)\le r$.
   Hence ${\dist(u,u')\le 2r}$.
   With \cref{cl:star'} this yields $\dist(u,v)\le 5r$, a contradiction.
  \end{proof}



\cref{lem:super-coloring} now follows from \cref{thm:independence-nip}.
\begin{proof}[Proof of \cref{lem:super-coloring}]
Let $\CC$ and $\phi$ be as in the assumptions of the lemma and let $d$ be the bound on the VC-dimension of $\phi$ on $\CC$. 
Let $G\in C$ and $U \subset V(G)$. 
By Corollary~\ref{lem:gaifman-coloring} we know that there exist numbers $r'$ and $t$ such that $V(G)$ can be colored by at most $t$ colors such that for all vertices $u,v$ of $G$ with $\dist^G(u,v)>r'$, $\psi(u,v)$ depends only on the colors of $u$ and $v$. 
Let $H = \interp I_{\phi}(G)[U]$ and for each color class $C_i \subset V(G)$ with $i \in[t]$ let $D_i$ be its restriction to the graph $H$, i.e. $D_i = C_i \cap U$.

We then have the following:
\begin{itemize}
\item For every $A \subset V(H)$ and $B \subset V(H)$ the VC-dimension of $(A\times B) \cap E(H)$ is bounded, and by Theorem~\ref{thm:pq-special} therefore $(A\times B) \cap E(H)$ has a duality of order $k$ depending only on $d$.
\item The function $\dist$ on $V(H)\times V(H)$ defined by setting $\dist(u,v) :=\dist_G(u,v)$ for every $u,v\in V(H)$ is a pseudometric.
\item  $\cal P = \set{D_i,\ldots, D_t}$ is a  partition $V(H)$ into sets such $uv \in E(H)$ depends only on the classes of $u$ and $v$ in $\cal P$ for every $u,v \in V(H)$ with $\dist(u,v)>r'$.
\end{itemize}
We can therefore apply Theorem~\ref{thm:independence-nip} to obtain a subset $S$ of $V(H)$ of size $\Oof(k t^2)$ 
such that $uv \in E(H)$ depends only on $E(u,S)$ and $E(S,v)$ for all $u,v\in V$ with $\dist(u,v)>5r'$.

Since $V(H) = U$ and $uv \in E(H)$ if and only if $G \models \phi(u,v)$, this concludes the proof after setting $r:=5r'$.
\end{proof}

\section{Model checking on interpretations of classes bounded local cliquewidth}
\label{sec:main-proof}

In this section we prove the main result of the paper.
Before we get started, we need to fix some notation.







\subsection{Graph classes}
We work with classes $\CC$ of graphs that are possibly equipped 
with unary predicates, constants, and \emph{flags}, that is, relation symbols of arity $0$ (a flag $f$ therefore evaluates to a Boolean $f_G\in\set {\textit{true},\textit{false}}$, for each structure $G\in \CC$).
More precisely, each class $\CC$ has a fixed finite signature $\Sigma$
which contains the binary relation symbol $E$, and relation symbols of arity $0$ or $1$, and constant symbols. Moreover, $E$ is interpreted as a symmetric, irreflexive relation in  each $G\in\CC$.
By abuse of language, we call structures in $\CC$ \emph{graphs}.
We will usually not mention the signature of a graph class explicitly, unless necessary.
We say that a class $\CC$ as above has \emph{bounded local cliquewidth}
if the class of underlying (usual) graphs has bounded local cliquewidth.

If $\Sigma$ and $\Gamma$ are two signatures with $\Sigma\subset\Gamma$, and $G$ is a $\Sigma$-structure,
then any $\Gamma$-structure~$G'$ obtained from $G$ by 
interpreting the symbols from $\Gamma$ not in $\Sigma$ 
is called a $\Gamma$-\emph{expansion} of~$G$.


\subsection{(Local) Cliquewidth}

We assume familiarity with the notions of treewidth and of cliquewidth. We denote the cliquewidth of a graph $G$ by $\cw(G)$.
We will need the following results. 

\begin{theorem}[\cite{ce12}]
\label{thm:cw_transductions}
Let $\CC$ be a class of graphs which is interpretable in a graph class of bounded cliquewidth. Then $\CC$ is of bounded cliquewidth.
\end{theorem}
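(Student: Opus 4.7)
The plan is to reduce the statement to two standard facts: the tree-characterization of bounded cliquewidth (Proposition~27 of~\cite{bc06}, already quoted above) and the composition of interpretations. Recall the characterization: a graph class has bounded cliquewidth if and only if it is contained in $\interp J(\mathcal T)$ for some $\MSO$-interpretation $\interp J$ and some class $\mathcal T$ of labeled trees. The theorem will then drop out by composing interpretations and invoking the characterization in the reverse direction.

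More precisely, suppose $\DD$ has bounded cliquewidth and $\CC\subset \interp I(\DD)$ for some interpretation $\interp I$. By the characterization, fix an $\MSO$-interpretation $\interp J$ and a class of labeled trees $\mathcal T$ with $\DD\subset \interp J(\mathcal T)$. Then
\[
  \CC\subset \interp I(\DD)\subset \interp I(\interp J(\mathcal T)) = (\interp I\circ \interp J)(\mathcal T).
\]
The composition $\interp I\circ \interp J$ is obtained by the usual substitution: inside each defining formula of $\interp I$, replace every atom $R(\bar z)$ over the signature of $\DD$ by the corresponding defining formula $\phi_R^{\interp J}(\bar z)$ of $\interp J$, and relativize every quantifier to $\delta_{\interp J}(x)$. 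Since $\MSO$ is closed under Boolean combinations, substitution, and relativization, and since any first-order formula is also an $\MSO$ formula, the resulting interpretation $\interp I\circ \interp J$ is again an $\MSO$-interpretation. Applying the ``if'' direction of the characterization to $\CC\subset (\interp I\circ \interp J)(\mathcal T)$ then yields that $\CC$ has bounded cliquewidth.

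The main conceptual step is the substitution/composition lemma for $\MSO$-interpretations; this is routine but must be carried out with care, in particular when domain formulas $\delta(x)$ are nontrivial, so that quantifiers in $\interp I$ are correctly relativized to the $\interp J$-definable domain. A fully self-contained alternative, which is the route actually taken in \cite{ce12}, bypasses the tree characterization and argues directly on clique-width expressions: given a width-$k$ expression for $G\in\DD$, one computes bottom-up along the expression tree the $\MSO$-types (of bounded quantifier rank determined by $\interp I$) of the subgraphs built so far; by compositionality of $\MSO$ under disjoint unions and relabelings, finitely many such types suffice, and they determine which edges of $\interp I(G)$ are present. This directly yields a clique-width expression for $\interp I(G)$ of width bounded by a function of $k$ and $\interp I$, which is enough to conclude.
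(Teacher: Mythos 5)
The paper gives no proof of this theorem; it cites it directly as \cite[ce12]{ce12}-folklore, so there is no in-paper argument to compare against. Both routes you sketch are correct reconstructions of the standard proof. The first route (tree-interpretation characterization plus composition of interpretations) is clean and fits naturally with the rest of the paper, which already invokes the Blumensath--Courcelle characterization (Proposition~27 of~\cite{bc06}) in its introduction; the composition lemma for $\MSO$-interpretations is indeed routine, and your remark about relativizing quantifiers of $\interp I$ to the domain formula of $\interp J$ is exactly the point that needs care. The second route, working bottom-up along a width-$k$ cliquewidth expression and tracking $\MSO$-types of bounded quantifier rank using compositionality under disjoint union and relabeling, is closer to what Courcelle and Engelfriet actually do and has the additional virtue of being constructive: it produces an explicit bound and a cliquewidth expression for $\interp I(G)$, which matters for algorithmic uses like \cref{thm:cw_mc}. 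Either approach is correct and complete for the claim as stated; no gap.
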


\begin{theorem}[\cite{ce12}]
\label{thm:cw_mc}
There is a function $h\from\N\to\N$, a constant~$c$,
and an algorithm that, given a (colored) graph $G$ and a sentence $\phi \in \FO$
decides whether $G \models \phi$ in time 
$h(\cw(G)+|\phi|)\cdot |G|^c$. 
\end{theorem}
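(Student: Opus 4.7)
The plan is to follow the standard two-phase strategy for algorithmic meta-theorems over width measures: first, obtain an explicit decomposition witnessing bounded cliquewidth, and second, run a type-tracking dynamic program on that decomposition. The first phase reduces to computing (or approximating) a $k$-expression for $G$ where $k$ is bounded by a function of $\cw(G)$; the second phase turns the model checking question into a DP over the associated expression tree whose table size depends only on $k$ and $|\phi|$.

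For the first phase, I would not try to compute $\cw(G)$ exactly, since that problem is hard. Instead, I would invoke the algorithm of Oum and Seymour that, given a graph $G$ and an integer $k$, either returns a rank-decomposition of width at most $3k+1$ or certifies that the rank-width of $G$ exceeds $k$; running it with $k$ doubling until it succeeds gives FPT running time in $\cw(G)$. The standard translation from rank-decompositions to $k'$-expressions (with $k'$ exponential in the rank-width) then yields, in FPT time in $\cw(G)$, a $k'$-expression for $G$ with $k' \leq g(\cw(G))$ for some computable $g$. The output is a labeled binary tree $T$ whose leaves introduce singleton labeled vertices and whose internal nodes perform the three standard operations on $k'$-labeled graphs (disjoint union, edge-addition $\eta_{i,j}$, and relabeling $\rho_{i\to j}$).

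For the second phase, I would run a bottom-up dynamic program on $T$ that computes, at each node $t$, the FO-type up to quantifier rank $q=|\phi|$ of the labeled subgraph $G_t$ produced at $t$. Here the ``type'' is taken in the signature enriched with the unary label predicates $L_1,\ldots,L_{k'}$ carried by the cliquewidth construction (plus whatever colors, constants, and flags are present). Since there are only finitely many types of quantifier rank $\le q$ in a fixed finite signature, bounded by some $h_0(k'+|\phi|)$, the type at an internal node is determined by the types at its children together with the operation label, via the usual Feferman--Vaught-type composition lemma for disjoint union and simple direct updates for $\eta_{i,j}$ and $\rho_{i\to j}$. Precomputing these transition functions costs $h_1(k'+|\phi|)$ time, and one linear pass over $T$ then determines the type of the full graph $G$, from which one reads off whether $G\models \phi$. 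Combining both phases yields an algorithm of the claimed form $h(\cw(G)+|\phi|)\cdot |G|^c$.

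The main obstacle is really just the first phase: computing an exact optimal cliquewidth expression is intractable, so the proof must go through an approximation, and the loss from the rank-width route is what forces $h$ to be a rapidly growing function of $\cw(G)$. The compositionality needed for the DP itself is routine, being a direct application of the Feferman--Vaught theorem adapted to the cliquewidth operations, which is why the type-tracking phase contributes only a polynomial factor $|G|^c$ to the overall running time.
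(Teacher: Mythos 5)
The paper does not give a proof of this statement; it is imported verbatim from Courcelle and Engelfriet~\cite{ce12} as a black box. Your sketch is a correct and essentially standard reconstruction of the argument behind that citation: approximate a clique-width expression via the Oum--Seymour rank-width approximation with iterative deepening on $k$ (sound because $\mathrm{rw}(G)\le\cw(G)$, and the exponential loss in converting a rank-decomposition to a $k'$-expression is absorbed into $h$), then run a bottom-up type-propagation DP over the expression tree. Courcelle and Engelfriet package the second phase as finite tree automata recognizing $\MSO_1$-definable term languages rather than as explicit Feferman--Vaught type tables, but for first-order logic the two formulations are interchangeable: disjoint union composes rank-$q$ FO-types by the Feferman--Vaught (Mostowski) lemma, and $\eta_{i,j}$ and $\rho_{i\to j}$ are quantifier-free definable operations on labeled graphs and therefore preserve determinacy of rank-$q$ types. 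So your route is equivalent to, and slightly more elementary than, the cited one; there is no gap.
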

We will also need the localized variant of cliquewidth. 
If $G$ is a graph, $v \in V(G)$, and $r \ge 0$, then we denote by $N_r^{G}[v]$ the set of vertices in $G$ of distance at most $r$ from $v$. For $\bar v \in V(G)^l$ we define $N_r^{G}[\bar v] := \bigcup_{i=1}^l N^G_r[v_i]$.

\begin{definition}\label{def:local-tw}
Let $G$ be a graph. For $r \ge 0$ we define $$\lcw_r(G) := \max \{ \cw\big( G[N^G_r[v]]\big) \sth v \in V(G) \}.$$  We say that a class $\CCC$ of graphs has \emph{bounded local cliquewidth} if there is a function $f \sth \N \rightarrow \N$ such that $\lcw_r(G) \le f(r)$ for all $G \in \CCC$ and $r \ge 0$. 
\end{definition}

Classes of graphs of bounded local cliquewidth include all classes of bounded local treewidth (defined analogously) such as classes of graphs of bounded degree, the class of planar graphs or more generally classes of graphs embedded on a surface of fixed genus. On the other hand, the class of apex graphs, i.e.,~graphs $G$ which are planar after removal of a single vertex, does not have bounded local treewidth or cliquewidth.
Another classical example of classes of graphs of bounded local cliquewidth are map graphs.
The following result has its roots in the work of Frick and Grohe~\cite{frickg01} (see also~\cite{dawargk07,gro07}).

\begin{theorem}\label{thm:mc-local-cw}
    There is a function $h\from\N\to\N$, a constant $c$,
    and an algorithm that, given a (colored) graph $G$ and a sentence $\phi \in \FO$
    decides whether $G \models \phi$ in time 
    \[h(\lcw_{h(|\phi|)}(G)+|\phi|)\cdot |G|^c.\]    
\end{theorem}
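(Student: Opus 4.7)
The plan is to follow the classical locality method of Frick and Grohe~\cite{frickg01}, which lifts FPT first-order model checking from the ``global'' setting (here, Theorem~\ref{thm:cw_mc}) to the ``local'' setting via Gaifman's locality theorem.

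The algorithm would proceed in four steps. First, apply Gaifman's locality theorem to rewrite $\phi$, modulo logical equivalence in every graph, as a Boolean combination of \emph{basic local sentences}
\[
\theta := \exists x_1, \ldots, x_m.\ \bigwedge_{i \neq j} \dist(x_i, x_j) > 2r \ \wedge\ \bigwedge_{i=1}^m \psi^{(r)}(x_i),
\]
where $\psi^{(r)}(x)$ is an $r$-local formula (its truth at a vertex $v$ depends only on $G[N_r^G[v]]$ with $v$ marked) and $r, m \le 2^{\Oof(|\phi|)}$; set $h_0(|\phi|) := r$. Second, for each $r$-local formula $\psi^{(r)}(x)$ appearing in the Gaifman normal form and each vertex $v \in V(G)$, decide whether $G \models \psi^{(r)}(v)$ by invoking the algorithm of Theorem~\ref{thm:cw_mc} on the induced subgraph $G[N_r^G[v]]$ with $v$ adjoined as a constant; since $\cw(G[N_r^G[v]]) \le \lcw_r(G) = \lcw_{h_0(|\phi|)}(G)$, a single call costs $h_1(\lcw_{h_0(|\phi|)}(G) + |\phi|) \cdot |G|^c$, and iterating over all $v$ produces the set $U_\psi := \{v : G \models \psi^{(r)}(v)\}$ in time $h_1(\lcw_{h_0(|\phi|)}(G) + |\phi|) \cdot |G|^{c+1}$. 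Third, for each basic local sentence $\theta$, decide whether $U_\psi$ contains $m$ vertices at pairwise distance $>2r$ in $G$. Fourth, combine the truth values of the basic local sentences as prescribed by the Gaifman normal form to obtain $G \models \phi$.

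The main obstacle is the third step, the \emph{$2r$-scattered set problem}: given $U \subseteq V(G)$ and parameters $m, r$, decide in FPT time whether $U$ has an $m$-element subset pairwise at distance $>2r$. A brute-force enumeration costs $|U|^m$, which is not FPT, and a naive greedy may discard too many candidates per step. For bounded-degree graphs this step is trivial because $r$-balls are of bounded size; in the bounded-local-cliquewidth setting, balls may be arbitrarily large, so a more delicate argument is required. I would adapt the bounded search-tree / selection strategy of~\cite{frickg01,dawargk07,gro07}: attempt greedy selection of witnesses; if it succeeds we return yes, and otherwise the greedy's failure confines the possible witness configurations to a bounded collection of balls of radius $\Oof(mr)$, each of which has cliquewidth at most $\lcw_{\Oof(mr)}(G)$, so that Theorem~\ref{thm:cw_mc} applies inside each piece to finish the search or to drive a further branching step. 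Letting $h(|\phi|)$ absorb both the Gaifman radius and this enlarged radius (still of order $2^{\Oof(|\phi|)}$), and $h$ on the cliquewidth-plus-formula argument absorb $h_1$ together with the overhead of the scattered-set routine, yields the claimed running time $h(\lcw_{h(|\phi|)}(G) + |\phi|) \cdot |G|^c$.
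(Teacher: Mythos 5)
The paper does not supply a self-contained proof of Theorem~\ref{thm:mc-local-cw}; it merely observes that the statement follows from Theorem~\ref{thm:cw_mc} together with the standard Gaifman-locality technique as presented in~\cite[Theorem~4.5]{gro07}. Your proposal reconstructs exactly that technique (Gaifman normal form, per-ball evaluation of $r$-local formulas via the bounded-cliquewidth algorithm, then deciding the $2r$-scattered-set condition), so it takes the same route as the paper's cited proof.

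The one place where your sketch is genuinely loose is the scattered-set step, and it is worth naming the missing ingredient precisely. When greedy stops after $k<m$ picks $v_1,\dots,v_k$, knowing that the witness set $U_\psi$ is covered by $k$ balls of radius $2r$ does not by itself ``confine the witnesses to a bounded collection of balls'': the union $\bigcup_i N^G_{2r}[v_i]$ need not be contained in any single bounded-radius ball, so its cliquewidth is not controlled by $\lcw$, and a size-$m$ scattered set may still be spread across several of the $v_i$. The standard fix is a clustering step: group the centers $v_i$ by chaining any two at distance $\le 6r$, so that each resulting cluster is contained in a single ball of radius $\Oof(mr)$ around one of its centers (and hence has cliquewidth at most $\lcw_{\Oof(mr)}(G)$, since it really is a ball), while any two elements of $U_\psi$ attached to distinct clusters are automatically at distance $>2r$. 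Inside each cluster-ball one computes, for every $j\le m$, whether $U_\psi$ restricted to that ball contains a $2r$-scattered subset of size $j$ --- this is an FO sentence over the ball with $U_\psi$ supplied as a colour, so Theorem~\ref{thm:cw_mc} applies --- and then one sums the per-cluster maxima, capped at $m$. This replaces the vague ``finish the search or drive a further branching step'' in your last sentence; with it, the proposal is correct and matches the paper.
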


The theorem follows from the model checking algorithm for bounded cliquewidth
as well as standard techniques, nicely presented in \cite[Theorem 4.5]{gro07}.
Note that this algorithm yields correct answers on all classes of graphs,
however it is only efficient on classes where the local cliquewidth is bounded.

We also need some notation related to Gaifman locality.
A first-order formula $\varphi(x_1, \ldots, x_l)$ is $r$-\emph{local}
if for every graph $G$ and $l$-tuple $\bar v \in V(G)^l$,
$
  G \models \varphi(\bar v) 
  \Leftrightarrow
  G[N_r^{G}[\bar v]] \models \varphi(\bar v).$

\begin{corollary}
  \label{cor:gaifman}
  For every formula $\psi(x,y)$ there exist numbers $r$ and $q$ such that the following holds: For every graph $G$ there exists an $r$-local formula $\psi_G(x,y)$ of quantifier rank at most $q$ such that for all vertices $u,v \in V(G)$ we have that 
  $G \models \psi_G(u,v)$ 
  if and only if
  $G \models \psi(u,v)$. 
\end{corollary}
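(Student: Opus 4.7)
The plan is to invoke Gaifman's locality theorem and then exploit the fact that the sentences it produces can be evaluated once and for all on the fixed graph $G$.

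Recall that Gaifman's locality theorem states that every first-order formula $\psi(x,y)$ is equivalent to a Boolean combination of (i) $r$-local formulas $\chi_i(x,y)$ in the free variables $x,y$, and (ii) \emph{basic local sentences} of the form
\[
\exists z_1,\ldots,z_k.\ \bigwedge_{i\ne j}\dist(z_i,z_j)>2s \ \land\ \bigwedge_{i=1}^k \lambda(z_i),
\]
where $\lambda$ is $s$-local in one free variable. Here both the radius $r$ and the quantifier rank of all the $\chi_i$ and $\lambda$ can be bounded by numbers depending only on $\psi$; let $r$ and $q_0$ be such bounds. Fix such a Gaifman normal form $\Phi(x,y)$ for $\psi(x,y)$, a Boolean combination of the $\chi_i(x,y)$ and of finitely many basic local sentences $\sigma_1,\ldots,\sigma_m$.

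Now fix a graph $G$. Each basic local sentence $\sigma_j$ is a \emph{sentence}, i.e.\ has no free variables, so it evaluates to a definite truth value $b_j \in \{\textit{true},\textit{false}\}$ in $G$. Let $\psi_G(x,y)$ be the formula obtained from $\Phi(x,y)$ by replacing each occurrence of $\sigma_j$ by the constant $b_j$ (i.e.\ by $\top$ or $\bot$) and simplifying. By construction, for every $u,v\in V(G)$ we have $G\models \psi_G(u,v)$ iff $G\models \Phi(u,v)$ iff $G\models \psi(u,v)$.

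It remains to observe that $\psi_G(x,y)$ is $r$-local and has bounded quantifier rank. After substitution, $\psi_G(x,y)$ is a Boolean combination of the $r$-local formulas $\chi_i(x,y)$ and of truth constants. A Boolean combination of $r$-local formulas (in the same free variables and with the same radius) is itself $r$-local, since evaluation of each $\chi_i(u,v)$ depends only on $G[N^G_r[\{u,v\}]]$, and hence so does their Boolean combination. The quantifier rank of $\psi_G$ is at most the maximum of the quantifier ranks of the $\chi_i$, which is at most $q_0$; so we may take $q:=q_0$. Both $r$ and $q$ depend only on $\psi$, as required. The one small subtlety — ensuring that all the local formulas in the chosen Gaifman normal form can be taken to have a common radius — is handled by noting that an $r'$-local formula is also $r$-local for any $r\ge r'$, so we simply take $r$ to be the maximum radius appearing in the normal form.
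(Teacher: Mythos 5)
The paper states \cref{cor:gaifman} without proof, treating it as a standard consequence of Gaifman's locality theorem. Your argument is exactly the intended one: apply Gaifman normal form, observe that the basic local sentences have fixed truth values in the fixed graph $G$, substitute those truth values, and note that what remains is a Boolean combination of $r$-local formulas in $x,y$, which is itself $r$-local and has quantifier rank bounded in terms of $\psi$ only. This is correct.
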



We use Gaifman to prove that interpretations of bounded local cliquewidth have bounded local cliquewidth, too.

\begin{lemma}
\label{lem:loc-bdcw}
	Let $\CC$ be a class of graphs of bounded local cliquewidth and let $\trans I_{\phi,\delta}$ be an interpretation 
    such that there exists $d$ such that $\phi(x,y)$ has range at most $d$.
	Then $\trans I_{\phi,\delta}(\CC)$ is a class of graphs of bounded local cliquewidth.
\end{lemma}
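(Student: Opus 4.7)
The plan is, for each fixed radius $r \ge 0$, to bound $\cw(H[N_r^H[u]])$ uniformly in $u \in V(H)$ and $H = \trans I_{\phi,\delta}(G)$ for $G \in \CC$. The first step is to relate $H$-distances to $G$-distances: since $\phi$ has range at most $d$, every $H$-edge joins vertices at $G$-distance at most $d$, and by induction on path length $N_r^H[u] \subseteq N_{rd}^G[u]$. Thus the vertex set of the $H$-ball we want to control lies inside the $rd$-ball around $u$ in $G$, which by the local cliquewidth assumption on $\CC$ has cliquewidth at most $f(rd)$, for the function $f$ witnessing bounded local cliquewidth of $\CC$.

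Next I would convert the interpretation into a local one via Gaifman. Applying \cref{cor:gaifman} to $\phi(x,y)$ and $\delta(x)$ yields a locality radius $r_\phi$ and a quantifier-rank bound $q$ such that for every $G \in \CC$ there exist $r_\phi$-local formulas $\tilde\phi_G(x,y)$ and $\tilde\delta_G(x)$ of quantifier rank at most $q$ which agree with $\phi$ and $\delta$ respectively on $G$. Since there are only finitely many first-order formulas of quantifier rank at most $q$ up to equivalence, the pairs $(\tilde\phi_G, \tilde\delta_G)$ range over a finite list $\Lambda$ depending only on $\phi$ and $\delta$. I then set $G'' := G[N_{rd + r_\phi}^G[u]]$, which has cliquewidth at most $f(rd + r_\phi)$ by the local cliquewidth of $\CC$. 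For any $v, w \in N_r^H[u]$, the $r_\phi$-balls of $v$ and $w$ in $G$ lie inside $V(G'')$ by the triangle inequality, and since $G''$ is an induced subgraph of $G$ these balls carry the same induced subgraph in $G$ and in $G''$. Therefore $\tilde\phi_G(v,w)$ and $\tilde\delta_G(v)$ evaluate identically on $G$ and on $G''$, and consequently $H[N_r^H[u]]$ is an induced subgraph of $\trans I_{\tilde\phi_G, \tilde\delta_G}(G'')$.

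Finally, each $(\tilde\phi, \tilde\delta) \in \Lambda$ defines an FO interpretation, so \cref{thm:cw_transductions} provides a function $g_{\tilde\phi,\tilde\delta}$ that sends a bound on the cliquewidth of the input to a bound on the cliquewidth of the output of $\trans I_{\tilde\phi,\tilde\delta}$. Taking the maximum over the finite list $\Lambda$ yields a single function $g$ with $\cw(\trans I_{\tilde\phi_G,\tilde\delta_G}(G'')) \le g(f(rd + r_\phi))$, and since cliquewidth does not increase under induced subgraphs we conclude $\cw(H[N_r^H[u]]) \le g(f(rd + r_\phi))$, which bounds the local cliquewidth of $\trans I_{\phi,\delta}(\CC)$ at radius $r$. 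The main technical point is the Gaifman step: one must verify that the per-$G$ choice of local surrogate $(\tilde\phi_G, \tilde\delta_G)$ is compatible with passing to a uniform cliquewidth bound over the finite list $\Lambda$, and that the $r_\phi$-ball restriction faithfully reproduces both $\delta(G) \cap N_r^H[u]$ and the $H$-adjacencies on it; the remainder is a routine combination of local cliquewidth of $\CC$, preservation of bounded cliquewidth under interpretations (\cref{thm:cw_transductions}), and monotonicity of cliquewidth under induced subgraphs.
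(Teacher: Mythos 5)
Your proposal is correct and follows essentially the same route as the paper: bound $N_r^H[u]$ inside $N_{rd}^G[u]$ using the range of $\phi$, replace $\phi$ by a Gaifman-local surrogate, restrict to the ball $G[N_{rd+r'}^G[u]]$ of bounded cliquewidth, and invoke preservation of bounded cliquewidth under interpretations plus monotonicity under induced subgraphs. The only cosmetic differences are that you also localize $\delta$ (the paper sidesteps this by observing $H$ is an induced subgraph of $\trans I_\phi(G)$ and working with the unrestricted $\trans I_{\phi'}$), and that you explicitly handle the $G$-dependence of the local formula via the finite list $\Lambda$ of formulas of bounded quantifier rank — a detail the paper leaves implicit.
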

\begin{proof}
	Our task is to prove that for every $H \in \trans   I_{\phi,\delta}(\CC)$, vertex $v_0 \in V(H)$ and every $r$ the graph  $H[N_r^H[v_0]]$ has cliquewidth bounded in terms of $r$. 
	
Let  $G \in \CC$ be such that $H = \trans I_{\phi,\delta}(G)$. This means that
$H = \trans I_{\phi}(G)[U]$, where $U = \setof{u \in V(G)}{G \models \delta(u)}$. In particular, $H$ is an induced subgraph of $\trans I_{\phi}(G)$.

 Let $r'$ and $\phi'(x,y)$ be the locality parameter and $r'$-local formula obtained by applying Gaifman's theorem in the form of \cref{cor:gaifman} to $\phi(x,y)$. Set $\ell:= rd+r'$ and $G_0 = G[N_{\ell}^G[v_0]]$. We will consider the graph $\trans I_{\phi'}( G_0)$ and show that \begin{itemize}
	\item $\trans I_{\phi'}( G_0)$ has cliquewidth bounded by a function of $r$ (here we consider $d$ and $r'$ to be fixed constants), and
	\item $H[N_r^H[v_0]]$ is an induced subgraph of $\trans I_{\phi'}(G_0)$.
	\end{itemize}
	The lemma then follows because cliquewidth is preserved by taking induced subgraphs.
	
	To show the first item we first note that $G_0$ has cliquewidth at most $g(\ell)$, where $g$ is the local cliquewidth bounding function for $\CC$. Let $\CC_{g(\ell)}$ be the class of all graphs of cliquewidth at most $g(\ell)$. 
We then have that  $\trans I_{\phi'}(G_0) \in \trans I_{\phi'}(\CC_{\ell})$,  The claim then follows by Theorem~\ref{thm:cw_transductions}.
%
	
	It remains to show that $H[N_r^H[v_0]]$ is an induced subgraph of $\trans I_{\phi'}(G_0)$, i.e., that $N_r^H[v_0] \subseteq V(G_0)$ and for all $u,v \in N_r^H[v_0]$ it holds that $uv \in E(H)$ if and only if $G_0 \models \phi'(u,v)$.
	Let $u,v$ be two vertices in $N_r^H[v_0]$.
	Since both $u$ and $v$ are at distance at most $r$ from $v_0$ in $H$, by our assumption on the range of $\phi(x,y)$ they are at distance at most $dr$ from $v_0$ in $G$.
	Indeed, $ab \in E(H)$ is equivalent to $G \models \phi(a,b)$, which implies, by assumption, $\dist^G(a,b) < d$.  
	This means that  $u,v \in N_{rd}^G[v_0]$ and so both $u$ and $v$ are in $V(G_0)$.  Moreover, every vertex at distance at most $r'$ from $u$ or $v$ in $G$ is at distance at most $rd+r'$ from $v$ in $G$, and so $N_{r'}^G[u] \cup N_{r'}^G[v] \subset N_{\ell}^G[v_0]$. Thus for the $r'$-local formula $\phi'(x,y)$ it holds that $G_0 \models \phi'(u,v)$ if and only if $G \models \phi'(u,v)$, and from Corollary~\ref{cor:gaifman} we know that $G \models \phi'(u,v)$ if and only if $G \models \phi(u,v)$. We therefore get $G_0 \models \phi'(u,v)$ if and only if $uv \in E(H)$, as desired.
\end{proof}

We will rely on the following theorem proved by Grohe and Tur\'an~{\cite[Lemma 22]{Grohe2003LearnabilityAD}.
\begin{theorem}\label{thm:blcwNIP}
Let $\CC$ be a class with bounded local cliquewidth.
Then $\CC$ is NIP.
\end{theorem}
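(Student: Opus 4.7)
The plan is to combine Gaifman's locality theorem with the classical fact that the class of graphs of cliquewidth at most $k$ is NIP for every $k$ --- this latter fact follows, for example, from the description of bounded-cliquewidth classes as $\MSO$-transductions of trees combined with NIP of the class of trees. Concretely, first I would apply \cref{cor:gaifman} to a given formula $\varphi(x,y)$ to fix constants $r,q$ such that on every $G \in \CC$ the formula $\varphi$ is equivalent to some $r$-local formula of quantifier rank at most $q$. Since there are only finitely many such formulas up to logical equivalence, it suffices to bound the VC-dimension uniformly for each of them, so I may assume $\varphi$ itself is $r$-local.

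Next I would isolate two regimes. In the \emph{far regime}, for $u,v$ with $\dist^G(u,v) > 2r$, $r$-locality combined with $N_r^G[\{u,v\}] = N_r^G[u] \sqcup N_r^G[v]$ forces $\varphi(u,v)$ to depend only on the pair of pointed local $q$-types of $u$ and $v$, of which there are boundedly many, say $T = T(r,q)$. In the \emph{near regime}, for any $w \in V(G)$ the induced subgraph on $N_{5r}^G[w]$ has cliquewidth at most $f(5r)$ where $f$ is the bounding function of $\CC$, and by NIP of bounded-cliquewidth classes, $\varphi$ has VC-dimension at most some constant $V_0 = V_0(\varphi, f(5r))$ when restricted to any such ball.

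The main step is then to glue the two regimes. Given a shattered set $v_1, \ldots, v_N$ with witnesses $(w_I)_{I \subseteq [N]}$, I would group the $w_I$'s by their pointed $q$-type and retain a subfamily $\mathcal{I} \subseteq 2^{[N]}$ of size at least $2^N/T$ whose witnesses share a common type. The key observation is: for each $i$, the far-regime description forces $\varphi(v_i, w_I)$ to be \emph{constant} in $I \in \mathcal{I}$ over the far witnesses, so we cannot simultaneously have both $\{I \in \mathcal{I} : i \in I\}$ and $\{I \in \mathcal{I} : i \notin I\}$ contain a far witness --- hence at least half of $\mathcal{I}$'s witnesses lie in $N_{2r}^G[v_i]$. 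Double counting then yields some $I_0 \in \mathcal{I}$ for which $A := \{i : v_i \in N_{2r}^G[w_{I_0}]\}$ has size at least $N/2$, and all $v_i$ for $i \in A$ live in $B := N_{4r}^G[w_{I_0}]$. Invoking the far-regime description once more shows that at most $T$ of the $2^{|A|}$ shattering patterns on $A$ can come from witnesses far from all of $A$, while the remaining $\ge 2^{|A|}-T$ patterns are realized by witnesses inside $B$. A Sauer--Shelah--Pajor argument then extracts a shattered set of size at least $|A| - O(\log T) \geq N/2 - O(\log T)$ inside $G[N_{5r}^G[w_{I_0}]]$, where $\varphi$ has VC-dimension at most $V_0$, forcing $N$ to be bounded.

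The hard part will be exactly this gluing step: one must ensure that the shattering recovered inside a single ball retains size essentially linear in $N$, rather than merely $\log N$ or $\sqrt{N}$, so that the local bound $V_0$ genuinely controls the global VC-dimension. This is precisely what the pigeonhole-plus-type-count bookkeeping above accomplishes, simultaneously handling near and far witnesses without losing too much in either.
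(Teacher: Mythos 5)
Your architecture is sound: reduce to $r$-local formulas via Gaifman, observe that for far pairs $\phi(u,v)$ is determined by the pair of local $q$-types, note that each $5r$-ball has bounded cliquewidth and hence bounded VC-dimension for $\phi$ there, and glue by a pigeonhole-plus-Sauer--Shelah argument. The near-regime reduction and the final Sauer--Shelah extraction are correct as you describe them. (For the record, the paper itself gives no proof of this statement but cites Grohe and Tur\'an~\cite[Lemma~22]{Grohe2003LearnabilityAD}; a self-contained argument is genuinely new content.)

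There is, however, a real gap at the heart of the gluing. From the far-regime observation you can only conclude, for each $i$, that at least one of $\mathcal{I}_i^+=\{I\in\mathcal{I}:i\in I\}$ and $\mathcal{I}_i^-=\{I\in\mathcal{I}:i\notin I\}$ consists \emph{entirely} of witnesses within $N_{2r}^G[v_i]$, namely the one whose membership bit disagrees with the type-determined far value. Nothing forces that side to be the larger one; the assertion ``at least half of $\mathcal{I}$'s witnesses lie in $N_{2r}^G[v_i]$'' therefore does not follow, and neither does the double count producing $|A|\ge N/2$. Concretely, with $T=2$ one may have $\mathcal{I}=\{I:1\in I\}\cup\{\emptyset\}$, so $\mathcal{I}_1^-=\{\emptyset\}$, and only a single witness out of $|\mathcal{I}|\ge 2^{N-1}$ is guaranteed near $v_1$ --- far short of half. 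The repair is to also group the $v_i$'s by local $q$-type, retaining a block of $N'\ge N/T$ indices across which the far-determined bit is a common $c\in\{0,1\}$; then $w_I$ is near $v_i$ precisely when coordinate $i$ of the indicator of $I$ differs from $c$, and since $\mathcal{I}$ still has at least a $1/T$-fraction of all subsets (restricted to the block), a Hamming-ball volume bound yields some $I_0\in\mathcal{I}$ whose indicator is at Hamming distance $\Omega(N')$ from the all-$c$ vector, hence whose witness is near a linear fraction of the retained $v_i$. With that replacement the rest of your argument (at most $T$ traces from witnesses far from all of $A$, containment of near witnesses in $N_{4r}^G[w_{I_0}]$, and Sauer--Shelah inside $G[N_{5r}^G[w_{I_0}]]$) goes through. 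As written, though, the per-$i$ ``at least half'' step is false and the subsequent averaging has no basis.
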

\Cref{lem:loc-bdcw,thm:blcwNIP} both hold when $\CC$ is a class of graphs equipped with unary predicates, constants, and flags.

\subsection{Flips}

For a graph $G$ and $m \in \N$, an \emph{$m$-flip} is an operation determined by a partition of $V(G)$ into sets $V_1,\ldots, V_{m'}$ with $m' \le m$  and a symmetric binary relation $R$ on $[m]$. The resulting graph has the same vertex set as $G$ and its edge relation is obtained from $E(G)$ by complementing the edges between any $x \in V_i$, $y \in V_j$ such that $(i,j) \in R$. Note that it can be $i=j$.
We will call the output of an $m$-flip operation on a graph $G$ also an \emph{$m$-flip of $G$}.
Also note that flips are reversible, that is if $H$ is an $m$-flip of $G$, then $G$ is an $m$-flip of $H$. Let $S \subset V(G)$. We say that an $m$-flip is \emph{guarded} by $S$ if each of the sets $V_1,\ldots, V_{m'}$ is of the form $\setof{v \in V(G)}{N_G(v)\cap S = A}$ for some $A \subset S$. Note that in this case $m' \le 2^{|S|}$.

\begin{lemma}
\label{lem:flipping}
Let $\CC$ be a NIP class of graphs, $\trans I_{\phi,\delta}$ be an interpretation, and $\DD = \trans I_{\phi,\delta}(\CC)$. There exist $s,r\in \N$, a signature $\Gamma$ 
expanding the signature of graphs by constant symbols and relation symbols of arity $0$, 
 a formula $\psi(x,y)$ in the signature $\Gamma$ that has range at most $r$, such that  the following holds.
For every $H \in \DD$ there exists a graph $F(H)$ and a graph $\wh G(H)$ in the signature $\Gamma$, such that:
\begin{itemize}
\item $F(H)$ is a $2^s$-flip of $H$, guarded by a set $S \subset V(H)$ of size at most $s$,
\item $\wh G(H)$ is a $\Gamma$-expansion of some graph $G\in\CC$,
\item $F(H) =  \trans I_{\psi,\delta}(\wh G(H))$.
\end{itemize}
\end{lemma}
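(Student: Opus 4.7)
The plan is to apply the main lemma (Lemma~\ref{lem:super-coloring}) to extract a small guarding set $S$, then to encode the resulting generic adjacency pattern into the signature $\Gamma$ via constants for the elements of $S$ and nullary flags, so that a single bounded-range formula $\psi$ implements the flip inside the expansion.

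First, since $\CC$ is NIP, $\phi(x,y)$ has bounded VC-dimension on $\CC$, so Lemma~\ref{lem:super-coloring} applied with $U := \delta(G)$ yields constants $s,r \in \N$ and, for every $G \in \CC$, a set $S = S(G) \subset \delta(G)$ of size at most $s$; I would pad with duplicates to exactly $s$ elements if necessary. Define $\Gamma$ to extend the signature of $\CC$ by $s$ constant symbols $c_1,\ldots,c_s$ and one nullary relation $\mathsf{f}_{A,B}$ for each ordered pair $A,B \subset [s]$. For $H = \trans I_{\phi,\delta}(G) \in \DD$, let $\wh G(H)$ be the $\Gamma$-expansion of $G$ that interprets each $c_i$ as the $i$-th element of $S$ and sets $\mathsf{f}_{A,B}$ to true iff some $u,v \in \delta(G)$ with $\dist^G(u,v) > r$ satisfy $\phi(u,c_i) \Leftrightarrow i \in A$, $\phi(c_j,v) \Leftrightarrow j \in B$, and $G \models \phi(u,v)$.

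Next, I would define $\alpha(x,y)$ as the $\Gamma$-formula
\[
\alpha(x,y) \;:=\; \bigvee_{A,B \subset [s]} \Bigl(\mathsf{f}_{A,B} \wedge \bigwedge_{i \in A}\phi(x,c_i) \wedge \bigwedge_{i \notin A}\neg\phi(x,c_i) \wedge \bigwedge_{j \in B}\phi(c_j,y) \wedge \bigwedge_{j \notin B}\neg\phi(c_j,y)\Bigr),
\]
and set $\psi(x,y) := (\phi(x,y) \oplus \alpha(x,y)) \wedge \mathrm{dist}_{\le r}(x,y)$, where $\mathrm{dist}_{\le r}$ is the standard FO-definable predicate over $E$. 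Then $\psi$ has range $\le r$ in every $\Gamma$-structure. Setting $F(H) := \trans I_{\psi,\delta}(\wh G(H))$, its vertex set is $\delta(G) = V(H)$; for pairs $u,v$ with $\dist^G(u,v) > r$, Lemma~\ref{lem:super-coloring} together with the choice of the flags forces $\alpha(u,v) \Leftrightarrow \phi(u,v)$, hence $\psi(u,v)$ is false, so $uv \in E(F(H)) \Leftrightarrow uv \in E(H)$; for closer pairs $\psi(u,v) = \phi(u,v) \oplus \alpha(u,v)$. Since $S \subset \delta(G)$ and $\phi$ defines $E(H)$, the color $\{i : \wh G(H) \models \phi(u,c_i)\}$ agrees with $\{i : c_i \in N_H(u) \cap S\}$, so $F(H)$ differs from $H$ exactly on pairs between classes $\{v : N_H(v) \cap S = A\}$ and $\{v : N_H(v) \cap S = B\}$ with $\mathsf{f}_{A,B}$ true, yielding a $2^s$-flip guarded by $S$.

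The main point I expect to need care with is the double identity of the coloring — $\phi(u,S)$ evaluated in $G$ versus $N_H(u) \cap S$ evaluated in $H$ — which is required to go from the abstract colors of Lemma~\ref{lem:super-coloring} to the guarded-flip requirement of the conclusion. They coincide because $\phi$ \emph{is} the edge-defining formula of $H$, but $\alpha$ must still be written in terms of $\phi$ (not $E$) so that it is meaningful in the signature of $\wh G(H)$. The distance guard in $\psi$ is a minor trick to make the range bound hold in every $\Gamma$-structure rather than only in the expansions we construct.
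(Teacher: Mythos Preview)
Your approach is essentially identical to the paper's: apply Lemma~\ref{lem:super-coloring} with $U=\delta(G)$, pad $S$ to an enumeration of length $s$, record its elements as constants and the far-distance $\phi$-pattern as nullary flags, and set $\psi := (\phi \oplus \alpha)\wedge \mathrm{dist}_{\le r}$. One sentence is misstated, however: for pairs with $\dist^G(u,v) > r$ you conclude ``$uv \in E(F(H)) \Leftrightarrow uv \in E(H)$'', but since $\psi(u,v)$ is false you actually have $uv \notin E(F(H))$ while $uv \in E(H) \Leftrightarrow \phi(u,v) \Leftrightarrow \alpha(u,v)$; the correct observation (which your final sentence implicitly relies on) is that $E(F(H))$ and $E(H)$ differ at $(u,v)$ precisely when $\alpha(u,v)$ holds, in both the far and the near case, and this is what yields the guarded $2^s$-flip.
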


\begin{proof}
Since $\CC$ is NIP and therefore has bounded VC-dimen\-sion, we can apply Lemma~\ref{lem:super-coloring} to $\CC$ and $\phi$ to obtain numbers $r$ and $s$ with the properties claimed there.

Let $G \in C$ be such that $H = \trans I_{\phi,\delta}(G)$. Let $U = \setof{v \in V(G)}{G \models \delta(v)}$. 
Note that $U$ is the vertex set of $H$, meaning we have $H = \trans I_{\phi}(G)[U]$. 
For any $u,v \in V(H)$ it therefore holds that $G \models \phi(u,v)$ if and only if $uv \in E(H)$. 
Since $\phi(x,y)$ is symmetric, for any $u \in U$ and $X \subset U$ it holds that $\phi(u,X) = \phi(X,u) = N_H(u)\cap X$.
Lemma~\ref{lem:super-coloring} then states the following: there is a set $S \subset V(H)$ of size at most $s$ such that 
for all $u,v \in V(H)$ with $\dist^G(u,v) > r$, whether or not $uv \in E(H)$ holds depends only on $N_H(u)\cap S$ and $N_H(v) \cap S$. 

We now describe the flip $F(H)$ of $H$ from the statement of the lemma. 
Let $c_1,\ldots,c_s$ be an enumeration containing all the elements of $S$ 
(possibly with repetitions, and possibly also containing some elements of $V(H)\setminus S$). This exists, since $|S|\le s$, and we may assume that $H$ has at least one vertex, the other case being trivial.
For every set $A\subset\set{1,\ldots,s}$, we define $V_A = \setof{v \in V(H)}{N_H(v)\cap S =
\setof{c_a}{a\in A}}$. This determines a partition of $V(H)$ into exactly $2^s$ sets.
Define a relation $R\subset 2^{[s]}\times 2^{[s]}$ as follows: $(A,B) \in R$ if and only if
there are $u \in V_A$ and $v \in V_B$ such that $\dist^G(u,v) > r$ and $uv \in E(H)$. 
Let 
$F(H)$ be the $2^s$-flip of $H$ determined by $R$; it is guarded by $S$, and $|S| \le s$.
Thus, the first statement of the lemma is satisfied.

We next describe the graph $\wh G(H)$ and the formula $\psi$. The graph $\wh G(H)$ is obtained from $G$ by 
\begin{itemize}
  \item Marking the elements $c_1,\ldots,c_s$ using $s$ constant symbols, which we also denote $c_1,\ldots,c_s$.
  \item Encoding the relation $R$ using $2^{[s]}\times 2^{[s]}$ flags $f_{A,B}$, for $A,B\subset\set{1,\ldots,s}$. Namely, $f_{A,B}$ is set to true in $\wh G(H)$, for $A,B\subset\set{1,\ldots,s}$  if and only if the pair $(A,B)$ belongs to the relation $R$.
\end{itemize}

Note that since $G$ and $\wh G(H)$ have the same vertices and edges, the distances between vertices are the same in $G$ and $\wh G$.
Also note that since the signature of $G$ is a subset of the signature of $\wh G(H)$, the formulas $\delta(x)$ and $\phi(x,y)$ can be evaluated in $\wh G(H)$.
To define the formula $\psi$, first define the following formulas.
For $A\subset [s]$, let $P_A(x)$ be the formula \[P_A(x):=\delta(x)\land\bigwedge_{i\in A}{\phi(x,c_i)}\land \bigwedge_{i\in [s]\setminus A}{\neg \phi(x,c_i)},\] expressing that $x$ belongs to the part $V_A$.
Further, let
\[\alpha_R(x,y) = \bigvee_{(A,B)\subset 2^{[s]}\times 2^{[s]}}f_{A,B}\land(P_A(x) \land P_B(y))\] be the formula ``encoding'' the edges flipped according to $R$, so that $\wh G(H)\models\alpha_R(u,v)$ holds for $u,v\in U=V(H)$ such if and only if $(A,B)\in R$,
where $V_A$ is the part containing $u$ and $V_B$ is the part containing $v$.
In particular, by definition of $R$,
\begin{align}\label{br}
  \wh G(H)\models (\alpha_R(u,v)\leftrightarrow \phi(u,v))\qquad\text{for all $u,v \in U$ with $\dist^{G}(u,v) > r$}.
\end{align}
We now set $\psi(x,y) := (\phi(x,y) \oplus \alpha_R(x,y))\land (\dist(x,y)\le r)$
where $\oplus$ denotes the $\mathit{xor}$ operation,
and $\dist(x,y)\le r$ is the formula expressing the existence of a path of length at most $r$ from $x$ to $y$ in the underlying graph (here is $r$ is a fixed constant).
By construction, $\psi$ has range at most $r$.
Moreover, by \eqref{br} we have 
\begin{align}\label{qeq}
  \wh G(H)\models \psi(x,y)\leftrightarrow (\phi(x,y)\oplus \alpha_R(x,y)).  
\end{align}


It remains to argue that $F(H) = \trans I_{\psi, \delta}(\wh G)$. Clearly the vertex set of $\trans I_{\psi, \delta}(\wh G)$ is the same as $V(H)$, as both are equal to $U$. To see that they have the same edges, recall that $\alpha_R(x,y)$ encodes the flip which was used to obtain $F(H)$ from $H$, and the formula $\phi(x,y) \oplus \alpha_R(x,y)$ can be viewed in the exactly the same way -- it first introduces all edges of $H$ via $\phi(x,y)$ and then flips away exactly the edges $uv \in E(H)$ with $\dist^{G}(u,v) > r$. By \eqref{qeq}, this proves $F(H) = \trans I_{\psi, \delta}(\wh G)$,
as required.
\end{proof}

\subsection{Proof of the main theorem}
At last, we can prove Theorem~\ref{thm:mc_bounded_local_cliquewidth}, which we restate here for convenience.

\begin{customthm}{\ref{thm:mc_bounded_local_cliquewidth}}
\label{thm:main}
  The first-order model checking problem is fixed-parameter tractable on any class $\DD$ of graphs that is interpretable in a class $\CC$ of graphs of  bounded local cliquewidth. 
\end{customthm}

\begin{proof}
Let $\phi(x,y)$ and $\delta(x)$ be the formulas defining an interpretation such that $\DD \subset \trans I_{\phi,\delta}(\CC)$.
Since every class of graphs of  bounded local cliquewidth is NIP by \cref{thm:blcwNIP}, we can apply Lemma~\ref{lem:flipping} to $\CC$ and $\trans I_{\phi,\delta}$. This yields numbers $r$ and $s$, a formula $\psi(x,y)$ of 
range at most $r$, and for every graph $H \in \DD$ a graph $F(H)$
and an expansion $\wh G(H)$ of some graph $G\in \CC$,
with the properties claimed there. We first establish a crucial property of graphs $F(H)$.

\begin{claim}The class $\DD' := \setof{F(H)}{H \in \DD}$
 has bounded local cliquewidth. 
\end{claim}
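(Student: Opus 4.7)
The plan is to derive the claim as a direct consequence of Lemma \ref{lem:loc-bdcw} applied to the interpretation $\trans I_{\psi,\delta}$. First, I would set $\wh\CC := \setof{\wh G(H)}{H\in\DD}$. By construction, each $\wh G(H)$ is a $\Gamma$-expansion of some $G\in\CC$, where $\Gamma$ extends the signature of graphs only by constant symbols (the marked vertices $c_1,\ldots,c_s$) and flags (the $f_{A,B}$ of arity $0$). In particular, the underlying graph of $\wh G(H)$ coincides with $G$, so the local cliquewidth of $\wh G(H)$ equals that of $G$ at every radius. Since $\CC$ has bounded local cliquewidth and local cliquewidth of classes equipped with unary predicates, constants, and flags is defined via the underlying graph, $\wh\CC$ has bounded local cliquewidth as well.

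Next, by Lemma \ref{lem:flipping} the formula $\psi(x,y)$ has range at most $r$, so the hypothesis of Lemma \ref{lem:loc-bdcw} is satisfied (recalling that this lemma extends to signatures enriched with unary predicates, constants, and flags, as noted just after its statement). Applying Lemma \ref{lem:loc-bdcw} to $\wh\CC$ and $\trans I_{\psi,\delta}$ yields that $\trans I_{\psi,\delta}(\wh\CC)$ is a class of graphs of bounded local cliquewidth.

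Finally, Lemma \ref{lem:flipping} gives $F(H)=\trans I_{\psi,\delta}(\wh G(H))$ for every $H\in\DD$, so $\DD'\subset \trans I_{\psi,\delta}(\wh\CC)$. Since bounded local cliquewidth is preserved under taking subclasses, $\DD'$ has bounded local cliquewidth, proving the claim.

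There is no real obstacle here: the argument is an immediate invocation of Lemma \ref{lem:loc-bdcw}, with the only mild point being the observation that passing from $G$ to its $\Gamma$-expansion $\wh G(H)$ does not affect the underlying graph and hence preserves the local cliquewidth bound, so the hypothesis of Lemma \ref{lem:loc-bdcw} is indeed met.
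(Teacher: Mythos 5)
Your proposal is correct and follows essentially the same route as the paper: observe that $\setof{\wh G(H)}{H\in\DD}$ is a class of expansions of graphs from $\CC$ by constant symbols and flags (hence has bounded local cliquewidth, since the underlying graph is unchanged), then apply \cref{lem:loc-bdcw} to the bounded-range interpretation $\trans I_{\psi,\delta}$ and use $F(H) = \trans I_{\psi,\delta}(\wh G(H))$. The only difference is that you spell out a few intermediate steps (that the hypothesis of \cref{lem:loc-bdcw} is met, that the property passes to subclasses) which the paper leaves implicit.
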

\begin{proof*}
  By \cref{lem:flipping}, we have 
  $\DD' = \setof{\trans I_{\psi,\delta}(\wh G(H))}{H\in \DD}$.
  As the class $\setof{\wh G(H)}{H\in\DD}$ consists of expansions of graphs from $\CC$ by constant symbols and flags, it has has bounded local cliquewidth. As $\DD'$ is the image of this class under the interpretation of $\trans I_{\psi,\delta}$, the claim follows from 
  \cref{lem:loc-bdcw}.  
\end{proof*}

We now describe the model checking algorithm.
Assume we want to determine whether $H \models \rho$ for some
$H \in \DD$ and $\rho \in \FO$.
Let $\cal S = \setof{ (S,R)}{ S \subseteq V(G), |S| \le s, R \subseteq 2^S \times 2^S, \text{$R$ symmetric}}$.
Note that $|\cal S| = \Oof_s(|V(H)|^s)$ (meaning $|\cal S| \le c_s \cdot |V(H)|^s$ for some constant $c_s$ depending only on~$s$).
We will generate
a collection of graphs $H_{S,R}$, $(S,R) \in \cal S$, colored with at most $2^s$ colors, together with sentences $\rho_{S,R}$ of length $\Oof_{s}(|\rho|)$ such that 
$ H \models \rho \Leftrightarrow H_{S,R} \models \rho_{S,R},$
and for at least one ``correct'' choice $(S,R) \in \cal S$, 
$H_{S,R}$ is a vertex-coloring of $F(H)$, and therefore is a vertex-coloring of some graph in $\DD'$. This then implies the theorem as follows. 

Let $\wh \DD'$ be the class of all graphs from $\DD'$ equipped with  $2^s$ new unary predicates. 
Since at least one $H_{S,R}$ is a vertex-coloring of a graph in $\DD'$, at least one $H_{S,R}$ is contained in $\wh \DD'$.
We now run  the model checking algorithm from Theorem~\ref{thm:mc-local-cw} for all $(S,R) \in \cal S$ in parallel to determine whether $H_{S,R} \models \rho_{S,R}$,
and terminate the whole process once the first run stops. 
By Theorem~\ref{thm:mc-local-cw}, for any choice of $H_{S,R}$, $\rho_{S,R}$ this algorithm stops in time at most $h(\lcw_{h(|\rho_{S,R}|)}(H_{S,R})+|\rho_{S,R}|)\cdot |H_{S,R}|^c$.
For the ``correct'' choice $(S,R) \in \cal S$ (such that $H_{S,R}$ is a vertex-coloring of $F(H)$),
$\lcw_w(H_{S,R}) = \lcw_w(F(H))$ for all $w \in \N$, and since $F(H)$ comes from a class of bounded local cliquewidth, $\lcw_w(F(H)) \le f(w)$ for some function $f$ depending only on $\DD'$ 
(where $\DD'$ depends only on $\phi$, $\delta$ and $\DD$).
Moreover, $|\rho_{S,R}| \le \Oof_s(|\rho|)$, so in total for this run the algorithm stops after at most $h(f(h(\Oof_s(|\rho|)))+|\rho_{S,R}|)\cdot |H|^c$ steps. Because there are $\Oof_s(|V(H)|^s)$ of runs executed in parallel, this bounds the run time of our algorithm by 
$\Oof_{s,|\rho|} (|H|^{c+s})$. Since $s$ depends on $\DD$ but not on $\rho$, we obtain FPT run time as desired. 


We now describe the construction of the graphs $H_{S,R}$ and sentences $\rho_{S,R}$.
For each set $S\subset V(H)$ of size at most $s$ and 
 symmetric binary relation $R$ over the color set $2^S$,
 define a colored graph $H_{S,R}$ obtained from $H$ as follows.
First, 
 color each vertex $u \in V(H)$ by $\lambda_S(u) := N_H(u) \cap S$.
 Then, flip the adjacency (that is, an edge becomes a non-edge and vice versa) between every pair $u,v \in V(H)$ if and only if $(\lambda_S(u),\lambda_S(v)) \in R$. Since we go through all subsets $S$ of $V(H)$ of size at most $s$ and all possible flips guarded by $S$, for some choice of $S$ and $R$, the graph $H_{S,R}$ is a coloring of $F(H)$.
  
  To describe $\rho_{S,R}$, first let us consider the formula $\zeta_{S,R}$ which flips the edges of $H_{S,F}$ back to obtain $H$, given by
  $$ \zeta_{S,R}(x,y) := x\not= y  \land \bigvee_{(\alpha, \beta) \in R} (c_{\alpha}(x) \land c_{\beta}(y)), $$
  where $\alpha, \beta \in 2^S$ and the meaning of the predicate $c_{\alpha}(u)$ is that $\lambda_S(u) = \alpha$. 
  Then $\rho_{S,R}$ is obtained from $\rho$ by
  replacing each occurrence of $E(x,y)$ by 
   $E(x,y) \oplus \zeta_{S,R}(x,y)$. We have $H \models \rho \Leftrightarrow H_{S,R} \models \rho_{S,R}$.
\end{proof}

\section{Discussion}\label{sec:discussion}
We now discuss how our results fit into a broader picture.
For the purpose of this discussion, it is slightly more convenient to replace interpretations with the more general \emph{transductions}, which are defined below.
\emph{Simple non-copying transductions} (here, \emph{transductions} for short) are defined similarly as interpretations. 
First, they may nondeterministically color 
the input graph $G$ with a fixed number of colors, and afterwards they apply 
a fixed interpretation to the obtained colored graph, yielding an output graph.
Thus, a transduction maps a single graph $G$ to a set of possible output graphs,
where the various possible outputs correspond to the various possible colorings.
Say that a class $\CC$ \emph{transduces} a class~$\DD$,
or that $\DD$ \emph{transduces in} $\CC$,
if there is a transduction $T$ such that $\DD\subset T(\CC)$.
As in the case of interpretations, this defines a transitive relation. If the interpretation applied by the transduction $T$ comes from a logic $\LLL$, we say that $T$ is an $\LLL$-transduction.

In the previous section, replacing interpretations with  transductions would not make a difference in most places. In particular, a class 
$\CC$ is a transduction of a class with bounded local cliquewidth if and only if it is an interpretation of such a class, so our main result also holds for the more general notion. Such a replacement is not neutral in all contexts, however.

For a property $\cal P$ of graph classes, 
we say a class $\DD$ of graphs has \emph{structurally} $\cal P$, if $\DD$ \FO-transduces in some class~$\CC$ with property $\cal P$.
So for example, a class $\CC$ is structurally nowhere dense if it transduces (equivalently, interprets) in some 
nowhere dense class $\DD$.
Classes that transduce (equivalently, interpret) in a class with bounded local cliquewidth are exactly classes with structurally bounded local cliquewidth, and our main result concerns those classes.

A reformulation of \cref{qstn:transductions_preserve_tractability}, generalized to transductions instead of interpretations, therefore asks:
are structurally tractable classes tractable?
Let us evaluate the status of this question by listing classes~$\CC$ 
that are known to admit an {\FPT} algorithm for {\FO} model checking
and discussing what can be said about transductions $\DD$ thereof.
See \Cref{fig:dense_inclusion_diagramm} for an overview.

Particularly interesting cornerstones in this context are \emph{transduction ideals}.
We use this term to denote properties of hereditary graph classes,
that are preserved by (first-order) transductions.
By transitivity of the transduction relation,
for every property $\cal P$ of graph classes, the property ``structurally~$\cal P$'' forms a transduction ideal.

\Cref{qstn:transductions_preserve_tractability} suggests the existence of an {\FPT} algorithm for {\FO} model checking on structurally nowhere dense classes.
Up to now, this has been only confirmed for classes of structurally bounded degree,
and for classes of bounded shrubdepth (that is, transductions of classes of trees of bounded depth).
Our main result in particular implies that the same holds
for every class with structurally bounded local treewidth.

Besides classes of structurally bounded degree and classes of bounded shrubdepth, classes with structurally
boun\-ded local treewidth include structurally planar classes, classes with
structurally bounded genus, and structurally apex-minor-free graph classes.

A next step would be to consider 
classes with structurally bounded expansion, which are strictly weaker than structurally nowhere dense classes. They do not 
include all classes with structurally  bounded local treewidth,
however (see \Cref{fig:sparse_inclusion_diagramm}).

We note that the proof of tractability of nowhere dense classes~\cite{gks17}
is based on an iterative application of locality arguments,
combined with structural properties of nowhere dense classes.
In particular, every $r$-ball in a graph from a nowhere dense class $\CC$  
belongs to a nowhere dense class that is simpler in some sense, as is formalized by the notion of splitter games~\cite{gks17}.
Therefore, it is conceivable that an extension of our methods 
will allow to approach the problem of tractability of structurally nowhere  
dense classes.

\paragraph*{Monadically stable classes.}
Structurally nowhere dense classes are further generalized by \emph{monadically stable} classes.
A~class is monadically stable if it does not transduce\footnote{This is one of two places where the distinction between transductions and interpretations matters. For example, consider the class $\CC$ of graphs that can be obtained from a clique by placing a vertex in the middle of every edge. Then $\CC$ does not interpret the class of half-graphs,
but $\CC$ transduces the class of all graphs, since we can first color some subset of the middle vertices, and in this way encode any graph. The 
other place where the distinction matters is in the definition of monadically NIP classes, and the same example illustrates the issue.} the class of all half-graphs. 
In other words, monadically stable classes form the largest transduction ideal that does not contain the class of half-graphs.
Monadically stable classes 
were introduced by Baldwin and Shelah \cite{baldwin1985second},
and are a special case of \emph{stable classes}, which are one of the central objects of interest in stability theory. Stability theory is now the main focus of model theory.

There are strong connections between stability theory and (structurally) sparse graph classes.
Most notably, it was shown by Podewski and Ziegler~\cite{podewski1978stable} in the late 70's, long before the development of sparsity theory, that all nowhere dense classes (called \emph{superflat} in their paper) are monadically stable. 

The result of Podewski and Ziegler,
connecting sparsity theory with stability theory, has been brought to the attention of the sparsity community by Adler and Adler~\cite{AA14}, who observed that nowhere dense classes are the same as superflat classes.
By the result of Podewski and Ziegler,  nowhere dense classes, and therefore also structurally nowhere dense classes, are monadically stable.
In the other direction, it is conjectured~\cite{rw_stable} that a graph class is monadically stable if and only if it is structurally nowhere dense.

\paragraph*{Unstable classes.}
Classes with bounded cliquewidth are not necessarily monadically stable,
as the class of all half-graphs has bounded cliquewidth and is not monadically stable by definition.
On the other hand, the class of planar graphs is nowhere dense, and hence monadically stable, but has unbounded cliquewidth. 
Bounded cliquewidth is therefore incomparable to monadically stable (or nowhere dense) classes.
Nevertheless, \FO (and even $\MSO_1$) model checking is \FPT on these classes.
Bounded cliquewidth also forms a transduction ideal (even for \MSO transductions).
Thus, if $\CC$ has bounded cliquewidth and $\DD$ is a transduction thereof, then $\DD$ also has bounded cliquewidth.

Classes with bounded local cliquewidth are tractable (see Theorem \ref{thm:loc-cw}), but 
do not form a transduction ideal, as they are not closed under edge-complementation.
By our main result, all classes with structurally bounded local cliquewidth are also tractable, and those do form a transduction ideal.

Recently, Bonnet, Kim, Thomass\'e and Watrigant~\cite{twin-width1} introduced the notion of
\emph{twinwidth} and showed that classes of bounded twinwidth are preserved by
\FO transductions. 
So bounded twinwidth is a transduction ideal, and subsumes bounded cliquewidth, but is incomparable to structurally  bounded local cliquewidth\footnote{The class of cubic graphs has bounded local cliquewidth,
but unbounded twinwidth \cite{BonnetGKTW21}. On the other hand, 
consider the class $\CC$ of graphs $G$ such that each connected component of $G$ is a grid with an added apex vertex. Then $\CC$  has bounded twinwidth, but does not have structurally bounded local cliquewidth.
}.
Moreover, \FO model checking is FPT on classes with bounded twinwidth, but only 
assuming an appropriate decomposition is given as additional input~\cite{twin-width1}. 

Let us stress that our algorithm captures all known transduction ideals for which the model checking problem is FPT (without an additional decomposition given as input).

\paragraph*{(Monadically) NIP classes.}
The following notion, encompassing all the graph classes mentioned above, again originates in stability theory -- despite its name, stability theory does not only concern stable classes.
A class $\CC$ is \emph{monadically NIP}, or \emph{monadically dependent}, if it does not transduce the class of all graphs.
In other words, monadically NIP classes constitute the largest transduction ideal, apart from the one that contains all classes.
All the aforementioned graph classes are monadically NIP:
nowhere dense classes, classes of structurally bounded local cliquewidth,
classes of bounded twinwidth, etc.
It is conjectured~\cite[Conjecture 8.2]{gajarsky_2020} that \FO model checking is FPT on all {monadically NIP} classes.
Every monadically NIP class $\CC$ is in particular NIP,
that is, every formula $\phi(x,y)$ has bounded VC-dimension on~$\CC$.
Hence, \cref{lem:super-coloring-intro} applies to all such classes.

\medskip
With essentially the same proof as 
for \cref{thm:mc_bounded_local_cliquewidth}, we can obtain the following, more general statement.
We say the \FO model checking problem is \emph{conservatively FPT} on a class of structures $\CC$
if there is an algorithm that, for every \FO formula $\phi$ and structure $G\in \CC$, 
 decides whether $G \models \phi$,
and runs in time $f(\phi)\cdot n^c$ for every $G \in \CC$ with $n$ elements.
In contrast, an FPT model checking algorithm on $\CC$ is not required to give correct answers for structures outside $\CC$.
All the FPT \FO model checking algorithms we discussed so far are also conservatively FPT.

Let $\Sigma$ be the signature of graphs. A formula $\phi(x,y)$ has \emph{bounded range} if it has range $\le r$, for some $r\in \N$.
An interpretation $\interp I\from \Sigma\to\Gamma$,
where $\Gamma$ is a signature consisting of unary and binary relation symbols, has \emph{bounded range} if for all binary symbols $R\in\Gamma$,
the formula $\phi_R(x,y)$ has bounded range.
An interpretation \emph{with parameters} $\interp I\from \Sigma\to\Gamma$ is an interpretation $\interp I'\from \Sigma'\to \Gamma$, where $\Sigma'$ expands $\Sigma$ with constant symbols
and relation symbols of arity $0$ (flags).
For such an interpretation and class $\CC$ of $\Sigma$-structures, write $\interp I(\CC)$ 
for the class of all structures of the form $\interp I'(\str A')$,
where $\str A'$ is a $\Sigma'$-structure expanding some structure $\str A\in \CC$, by providing an interpretation of each constant symbol in $\Sigma'$ and not in $\Sigma$. A~class of \emph{colored graphs} is a class 
of structures over a signature $\Gamma=\set{E,U_1,\ldots,U_k}$, where $U_1,\ldots,U_k$ are unary relation symbols, and $E$ is interpreted as a binary symmetric, irreflexive relation. 

\begin{theorem}\label{thm:local_iff_global}
    Let $\CC$ be an NIP class of graphs.
    The first-order model checking problem is conservatively FPT on graph classes that interpret in $\CC$, if it is conservatively FPT on classes of colored graphs that interpret in~$\CC$ via a bounded-range interpretation with parameters.
\end{theorem}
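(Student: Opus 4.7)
The proof would mirror the argument for \cref{thm:mc_bounded_local_cliquewidth}, with the only change being that the appeal to the model checking algorithm on bounded local cliquewidth (\cref{thm:mc-local-cw}) is replaced by the abstract hypothesis. First, given a class $\DD$ interpreting in $\CC$ via an interpretation $\interp I_{\phi,\delta}$, I would invoke the flipping lemma (\cref{lem:flipping}), which requires only that $\CC$ be NIP. This produces constants $s,r\in\N$, an expanded signature $\Gamma$ adding $s$ constants $c_1,\dots,c_s$ and a bounded family of flags, and a bounded-range formula $\psi(x,y)$ over $\Gamma$, such that for every $H\in\DD$ there exist a set $S\subseteq V(H)$ with $|S|\le s$, a $\Gamma$-expansion $\wh G(H)$ of some $G\in\CC$, and a $2^s$-flip $F(H)$ of $H$ guarded by $S$, with $F(H)=\interp I_{\psi,\delta}(\wh G(H))$.

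Given input $(H,\rho)$, I would then enumerate all $\Oof_s(|V(H)|^s)$ pairs $(S,R)$ where $S\subseteq V(H)$, $|S|\le s$, and $R$ is a symmetric binary relation on $2^{[s]}$. For each guess I form the vertex-colored graph $H_{S,R}$ exactly as in the proof of \cref{thm:main}: color each $v\in V(H)$ by $\lambda_S(v)=N_H(v)\cap S$, then flip the adjacency of pairs $(u,v)$ whose colors lie in $R$. I translate $\rho$ to a sentence $\rho_{S,R}$ by syntactically undoing the flip at every occurrence of $E$, which ensures $H\models\rho\iff H_{S,R}\models\rho_{S,R}$. The crucial observation is that for the guess $(S_0,R_0)$ matching the flipping lemma's output, the colored graph $H_{S_0,R_0}$ is the image of $\wh G(H)$ under a \emph{bounded-range interpretation with parameters} from $\CC$ to colored graphs: the edge relation is defined by $\psi$ (which has range at most $r$), while each color class is cut out by the formula $P_A(x)\equiv\delta(x)\wedge\bigwedge_{a\in A}\phi(x,c_a)\wedge\bigwedge_{a\notin A}\neg\phi(x,c_a)$, using the parameter constants $c_1,\dots,c_s$. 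These $P_A$-formulas are not bounded-range, but by definition only the binary symbols in the output signature are required to be defined by bounded-range formulas, so this is permitted.

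Finally, I would apply the hypothesis to this class $\EE$ of colored graphs: there is a conservatively FPT model checking algorithm $\mathcal{A}$ on $\EE$. I would run $\mathcal{A}(H_{S,R},\rho_{S,R})$ for all guesses in parallel and return the answer of the first run to terminate. On the correct guess the input lies in $\EE$, so $\mathcal{A}$ halts in time $f(|\rho|)\cdot n^c$; on incorrect guesses the answer returned is still correct thanks to conservativity, even if $\mathcal{A}$ would be slow. Since the number of guesses is polynomial in $n$, the total running time up to the first termination is FPT in $|\rho|$, yielding a conservatively FPT algorithm on $\DD$. The only delicate point is the interaction between guessing and conservativity: correctness on wrong guesses is exactly what conservativity buys us, while the FPT bound comes entirely from the correct-guess run's timely termination. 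Hence no new combinatorial work is needed beyond the flipping lemma and the bookkeeping already carried out for \cref{thm:main}.
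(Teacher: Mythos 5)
Your proposal is correct and takes essentially the same approach as the paper, which itself presents \cref{thm:local_iff_global} as following ``with essentially the same proof'' as \cref{thm:mc_bounded_local_cliquewidth}, replacing the concrete model checking algorithm for bounded local cliquewidth with the abstract hypothesis. Your key observation---that the colored graph $H_{S_0,R_0}$ obtained on the correct guess is the image of $\wh G(H)$ under a bounded-range interpretation with parameters, noting that only the binary edge relation (defined by $\psi$) must have bounded range while the unary color predicates $P_A$ may use the parameter constants freely---is precisely the bookkeeping needed to make the substitution valid.
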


Bounded-range interpretations (with parameters) of classes with bounded local cliquewidth 
again have bounded local cliquewidth and therefore a conservative FPT model checking algorithm.
Therefore, \Cref{thm:mc_bounded_local_cliquewidth} merely describes part of a bigger picture painted by \Cref{thm:local_iff_global}.
We believe it will serve as a crucial tool towards answering \cref{qstn:transductions_preserve_tractability} in other cases. 

\bibliographystyle{plain}



\end{document}